\theoremstyle{thmstyleone}%
\newtheorem{theorem}{Theorem}%
\theoremstyle{thmstyletwo}%
\theoremstyle{thmstylethree}%
\newtheorem{definition}{Definition}%
\newtheorem{assumption}{Assumption}
\newtheorem{lemma}{Lemma}
\tikzset{
  half circle/.style={
      semicircle,
      shape border rotate=180,
      anchor=chord center,
      minimum size=5mm
      }
}
\newcommand{\indep}{\mathrel{\perp\!\!\!\perp}}
\renewcommand{\P}{\text{P}}
\newcommand{\Polylog}{\text{polylog}}
\DeclareMathOperator*{\argmin}{\arg\!\min}
\let\@reinserts\relax
\pgfplotsset{compat=1.18}
\begin{document}

\captionsetup[figure]{labelfont={bf},labelformat={default},labelsep=period,name={Fig.}}

\preprint{APS/123-QED}

\title{Quantum Algorithms for Causal Estimands}%

\author{Rishi Goel}
\email{uqrgoel2@uq.edu.au}
\affiliation{School of Mathematics and Physics, The University of Queensland, St. Lucia, QLD 4072, Australia}
\affiliation{ARC Centre for Engineered Quantum Systems, The University of Queensland, St. Lucia, QLD 4072, Australia}

\author{Casey R. Myers}
\affiliation{School of Physics, Mathematics and Computing, The University of Western Australia, Crawley, WA, 6009, Australia}
\affiliation{Pawsey Supercomputing Centre, 1 Bryce Avenue, Kensington, WA, 6151, Australia}

\author{Sally Shrapnel}
\affiliation{School of Mathematics and Physics, The University of Queensland, St. Lucia, QLD 4072, Australia}
\affiliation{ARC Centre for Engineered Quantum Systems, The University of Queensland, St. Lucia, QLD 4072, Australia}

\date{\today}%

\begin{abstract}
Modern machine learning methods typically fail to adequately capture causal information. Consequently, such models do not handle data distributional shifts, are vulnerable to adversarial examples, and often learn spurious correlations \cite{schölkopf2022statisticalcausallearning}. Causal machine learning, or causal inference, aims to solve these issues by estimating the expected outcome of counterfactual events, using observational and/or interventional data, where causal relationships are typically depicted as directed acyclic graphs. It is an open question as to whether these causal algorithms provide opportunities for quantum enhancement. In this paper we consider a recently developed family of non-parametric, continuous causal estimators and derive quantum algorithms for these tasks. Kernel evaluation and large matrix inversion are critical sub-routines of these classical algorithms, which makes them particularly amenable to a quantum treatment. Unlike other quantum machine learning algorithms, closed form solutions for the estimators exist, negating the need for gradient evaluation and variational learning. We describe several new hybrid quantum-classical algorithms and show that uniform consistency of the estimators is retained. Furthermore, if one is satisfied with a quantum state output that is proportional to the true causal estimand, then these algorithms inherit the exponential complexity advantages given by quantum linear system solvers.    
\end{abstract}

\maketitle

\section{Introduction}\label{Section: Introduction}

Causal learning has emerged as a recent and important extension to the field of classical machine learning~\cite{JanzingBook2017,schölkopf2022statisticalcausallearning}, having proven itself invaluable in many fields, including healthcare~\cite{SanchezRSOS2022}, social sciences~\cite{PanArXiv2024} and economics~\cite{BaiardiEconometricsJ2024}. Causal learning is effective because it enables models to move beyond mere correlations and uncover underlying cause-and-effect relationships. When applied in new settings that involve distributional shifts or interventions, traditional machine learning models often struggle since they rely on statistical associations that may not hold~\cite{ScholkopfIEEE2021}. While deep learning and generative machine learning have taken centre stage in the industrial application of automated learning~\cite{JamwalIJIMDI2022}, it is well known that these technologies have significant vulnerabilities due to their inability to reliably capture causal concepts~\citep{schölkopf2022statisticalcausallearning}. Increasingly, classical machine learning (ML) experts are taking techniques from causal learning, traditionally limited to small data sets of low dimensionality, and injecting modern ML elements to create new algorithms that can take a range of complex data types (e.g. graphs, images, text)~\cite{grettonkernelcausal} as input.

Quantum machine learning (QML) has taken off as a field since its early adoption a decade ago~\cite{RebentrostPRL2014,LloydNat2014,BiamonteNat2017}. It has demonstrated broad applicability across a range of areas, with promising results in data analysis in the form of clustering~\cite{KerenidisNEURIPS2019}, classification~\cite{AbohashimaArXiv2020} and dimensionality reduction~\cite{YuPhysicaA2023}, as well as for optimisation problems~\cite{BlekosPhysRep2024}, generative models~\cite{ZoufalArXiv2021}, quantum chemistry~\cite{TillyPhysRep2022} and finance~\cite{PistoiaIEEE2021}. Recently there has been a considerable effort to develop noisy intermediate-scale quantum (NISQ) algorithms for QML tasks in the form of variational circuits~\cite{CerezoNatRevPhys2021, BhartiRMP2022}, but there has also been significant progress in designing fault tolerant QML algorithms~\cite{GuoArXiv2024} that take advantage of quantum computing's intrinsic ability to speed-up linear algebra operations~\cite{MoralesArXiv2025}. Given the large number of proposed quantum machine learning algorithms \cite{CerezoNatCompSci2022}, it is surprising that ideas from causal learning have not been considered. Combining the approaches of QML and causal learning  %
provides an opportunity to find new avenues to realise potential quantum advantage that will be relevant to a broad range of fields.

In this paper we propose a new quantum algorithm for causal inference and show it can be used as a core subroutine to perform a variety of causal inference tasks. We examine convergence rates and also prove that uniform consistency of each of these estimators is retained. In the main body of the paper we present these results for estimating \emph{dose response}, however in the appendix we show the results also hold for (i) causal estimands that require causal sufficiency (no hidden common causes): \emph{dose response with distributional shift, conditional response} and \emph{heterogeneous response}, and (ii) causal estimands that do not require causal sufficiency: \emph{counterfactual mean outcome given intervention, counterfactual distribution given intervention} and \emph{counterfactual distribution embedding given intervention}. It is likely there will be many more possibilities and we hope this paper will provide impetus for fruitful new algorithmic discovery in this direction. As is often the case, our algorithms use quantum linear subroutines, so we calculate query complexity scaling relative to condition number and with the caveats of matrix sparsity and the possibility of fault tolerant hardware. Interestingly, the causal paradigm provides a framework in which one can tune a single regularisation hyperparameter to lower condition number, thus these algorithms are particularly well-suited for a quantum treatment.

{The paper is structured as follows: Section \ref{Section: Classical Causal Effect Estimation} describes the classical task of \emph{Causal Effect Estimation} and presents the corresponding convergence guarantees. Section \ref{Section: Quantum Algorithms for Causal Estimands} highlights computational bottlenecks for the classical algorithm and presents pseudocode for our proposed alternative quantum algorithm \emph{Quantum Dose Response}. This section also contains our main result which proves uniform consistency for this algorithm. Our consistency analysis is split into 2 sections. \ref{Section: Uniform Consistency and Convergence Rates of Quantum Algorithms} proves the uniform consistency guarantee for the quantum algorithm while \ref{Section: Convergence Rate with Quantum Measurement Error} further derives results for the case where one has measurement error. Finally, we conclude and make suggestions for future work in \ref{Section: Discussion}.}

\section{Classical Causal Effect Estimation}\label{Section: Classical Causal Effect Estimation}

Unlike traditional ML approaches which are built by identifying symmetric correlations, causal learning starts from an underlying assumption of directedness. Causal ML models aim to predict what will happen when one acts in the world, and therefore need to be robust to the kind of distributional shifts caused by interventions. Such models are particularly popular in medical and economic contexts where causal predictions frequently inform policy \cite{schölkopf2022statisticalcausallearning}. There is a large literature of distinct approaches, including the potential outcomes framework \cite{Richardson2013SingleWI}, the do-calculus and structural equations approach \cite{JudeaPearl1995, Sprites/Glymour}, among others (see figures in appendix \ref{Section: Causal Structure} for illustrative examples).
In this paper we take as our starting point the potential outcomes framework, where one can ask counterfactual questions, such as what \emph{would} have happened to the outcome $Y^{(a)}$ if we had tried treatment $a'$ instead of treatment $a$?

The goal of our causal learning algorithm is to estimate a causal function that captures the expected counterfactual outcome $Y^{(a)}$ given a hypothetical intervention that sets $A=a$, in the presence of confounder variables (see figure \ref{fig:3_node_structure_main_text}). The key causal function of interest is hence:

\begin{align}
    \text{Dose response:  }\ \theta_0(a)=\mathbb{E}\{Y^{(a)}\}.
\end{align}

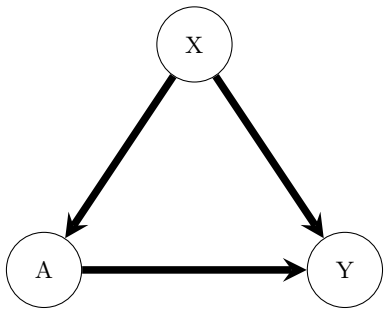
\begin{figure}[!htbp]
    \centering
    \begin{tikzpicture}
    \node[circle, draw, minimum size=1cm] (A) at (0, 0) {A};
    \node[circle, draw, minimum size=1cm] (X) at (2, 3) {X};
    \node[circle, draw, minimum size=1cm] (Y) at (4, 0) {Y};

    \draw[->, line width=1mm, >=stealth, scale=2] (X) -- (A);
    \draw[->, line width=1mm, >=stealth, scale=2] (X) -- (Y);
    \draw[->, line width=1mm, >=stealth, scale=2] (A) -- (Y);
\end{tikzpicture}
        \caption{\justifying Three node causal Directed Acyclic Graph (DAG). Here $X$ represents confounding covariates: any variable that has a causal effect on both the treatment $A$ and the outcome $Y$. There is an assumed causal relationship between $A$ and $Y$.}
        \label{fig:3_node_structure_main_text}
\end{figure}

While parametric estimators for low dimensional discrete variables have been available for these causal functions since 1978 \cite{rubin1978}, new algorithms have been developed that make no parametric assumptions on the underlying data distributions and can handle both discrete and continuous variables.  The standard three assumptions underlying many causal methods apply in this case: \emph{no interference, conditional exchangability}, and \emph{overlap} (see appendix \ref{Section: Classical Causal Inference Proofs} for more detail).

These new non-parametric approaches use the machinery of Reproducing Kernel Hilbert Spaces (RKHS) and Kernel Ridge Regression to derive a closed form expression to estimate Dose Response \cite{grettonkernelcausal} (see appendix \ref{Section: Classical Causal Inference Proofs} for the derivation):

\begin{align}
\hat{\theta}(a) &= n^{-1} Y^T (K_{AA} \odot K_{XX} +  n\lambda I)^{-1} (K_{Aa} \odot \sum_{x_i} K_{Xx_i}),\label{alg}\end{align}
where $n$ is the number of training data points, $Y$ is a vector of labelled outcomes, $\odot$ is the Hadamard product, $K_{AA}$ is the kernel matrix with each element $K_{AA}(i,j) = k(a_i,a_j)$ and similarly for $K_{XX}$. $\lambda$ is a regularising hyperparameter and the vectors $K_{Aa}$ are the column vectors of the kernel elements between the set $A$ and the element $a$, and similarly for $K_{Xx_i}$.

Interestingly, it can be shown that the classical causal algorithm is uniformly consistent (valid across the full range of inputs) and converges for large $n$, with high probability. The rate of convergence{~\cite{grettonkernelcausal}} is dependent on the smoothness of the function, captured by a parameter $c$, and the effective dimension of the RKHS under consideration, captured by the parameter $b$:

\begin{align}
    \|\hat{\theta} - \theta_0\|_{\infty} &= \mathcal{O} \left[n^{-(c-1)/\{2(c+1/b)\}}\right]\label{unif consistent equation ATE main text},
\end{align}
where $c \in (1,2]$ and $b \in \mathbb{N}$ (higher numbers imply greater smoothness which requires less data to estimate). In the optimal case, $c=2, b\to \infty$, the convergence rate limits to $\mathcal{O}(n^{1/4})$. Further explanation is presented in appendix \ref{Section: Classical Uniform Consistency Results}.

\section{Quantum Algorithms for Causal Estimands}\label{Section: Quantum Algorithms for Causal Estimands}

There are two computational bottlenecks in implementing these causal estimation algorithms using classical hardware. The first is the evaluation of the $n^2$ entries in each kernel. While no quantum method exists to reduce the complexity of kernel evaluation, understanding the potential benefits of using quantum hardware to approximate kernel entries is a rapidly growing area of research---we consider these opportunities in appendix \ref{Section: Quantum kernel evaluation}. The second is the computational cost of multiplying and inverting large kernel matrices, a problem for which quantum algorithms are well suited. The classical complexity of exactly calculating a matrix inverse scales as $\mathcal{O}(n^3)$, where $n$ is the number of samples in the training set---LU decomposition or Gaussian Elimination methods are examples of such algorithms. {Alternatively, the conjugate gradient method scales linearly in n, $\sqrt{\kappa}$ and log(1/$\epsilon$), but is an approximate and iterative method.}

While it is well known~\cite{HarrowPRL2009, MoralesArXiv2025} that quantum linear subroutines can provide exponential speedups over classical counterparts when one assumes efficient quantum encoding and decoding strategies, it is not yet known whether statistical guarantees of uniform consistency and fast convergence are maintained for the particular application we present here. We next propose an algorithm that uses a quantum linear system subroutine for matrix inversion for calculating the causal estimand and analyse its statistical properties.

The goal of quantum linear system (QLS) solvers is to produce a quantum state $\ket{x}$ proportional to the solution of linear system $Ax = b$, where $A$ is an $n$-by-$n$ matrix. All have $\mathcal{O}(\Polylog (n))$ scaling in sample size, and additional scaling properties that depend on the condition number $\kappa$ of the matrix $A$~\cite{MoralesArXiv2025}. All these solvers assume efficient state preparation of $\ket{b}$ \st{and} {while only some require} block encoding of the operator $A$. Whether or not this will require QRAM is still an open question and will likely depend on the problem. However, recent work~\cite{GreenArXiv2025, CampsArXiv2023, MelnikovQST2023} on loading structured data into quantum circuits using matrix product states is very encouraging for the types of data set relevant to causal learning.

The following is pseudo-code for the implementation of a hybrid quantum algorithm for causal effect estimation of the estimand specified in Eq.~(\ref{alg}).

\begin{algorithm}[!htbp]
\caption{Quantum Dose Response $\hat{\theta}_0(a)$}\label{alg 1}
\KwIn{Kernel entries: $K_{AA}, K_{XX}, K_{Aa}, \sum_{x_i} K_{Xx_i}$; regularization parameter $\lambda$; outcome vector $Y$.}
\KwOut{Approximate quantum state with amplitude proportional to $\hat{\theta}_0(a)$}

\textbf{Classical Computation:} \\
Compute $A = K_{AA} \odot K_{XX} + n\lambda I$\;
Compute and normalize $b = \frac{1}{n} K_{Aa} \odot \sum_{x_i} K_{Xx_i}$\;

\textbf{Quantum Subroutine:} \\
Prepare quantum states $\ket{b}$ and $\ket{Y}$ and block encoding $U_A$\;
Use a quantum linear solver to obtain $\ket{A^{-1} b}$ to additive error $\epsilon$\;
Measure overlap $ \bra{Y} \ket{A^{-1} b}$\;
\end{algorithm}

It has recently been shown that if matrices are low rank, well-conditioned and high dimensional, quantum inversion methods only offer a polynomial advantage over classical algorithms \cite{Arrazola_2020}. Hence the opportunity for quantum algorithms is for high rank, high dimensional problems~\cite{HarrowPRL2009, MoralesArXiv2025}---fortunately this is commonly encountered in causal ML applications.

In addition, the computational complexity of an optimal quantum linear solver scales linearly with the condition number $\kappa$~{~\cite{HarrowPRL2009, MoralesArXiv2025}}. Consequently, the condition number of the matrix $A= K_{AA} \odot K_{XX} + n\lambda I$ (used in Alg. \ref{alg 1}) plays a critical role in determining the potential for exponential speedup. Notably, the regularisation parameter $\lambda$ and number of training samples $n$ directly influence the conditioning of this matrix.

As $K_{AA} \odot K_{XX}$ is diagonalizable, the addition of the regularisation term uniformly shifts its eigenvalues. Specifically, for any eigenpair $(\textbf{v}, e)$ of $K_{AA} \odot K_{XX}$, we have,

\begin{align}
(K_{AA}\odot K_{XX}) \textbf{v} &= e \textbf{v}, \\
(K_{AA}\odot K_{XX} + n\lambda I) \textbf{v} &= e \textbf{v} + n\lambda \textbf{v} = (e+n\lambda) \textbf{v}\label{kappa lambda equation dependence}.
\end{align}

Hence, the condition number of the regularised matrix is given by,

\begin{equation}
\kappa = \frac{e_{\max} + n\lambda}{e_{\min} + n\lambda},
\end{equation}
which approaches $1$ as $n\lambda$ increases. This demonstrates that for fixed $n$, regularisation can be used to improve conditioning, thereby reducing the query complexity of quantum linear solvers. Alternatively, if one fixes $\lambda$, adding more training data can also improve conditioning.
{Ultimately, the minimal $\lambda$ and $n$ for well conditioned matrices will depend on $e_{\max}$ and $e_{\min}$ which will be problem dependent.}

To empirically validate this behaviour, we investigate two causal datasets (Job Corp and Colangelo \cite{grettonkernelcausal}) with a fixed number of training samples. Using a Gaussian kernel, we calculate the kernel matrices, vary $\lambda$ between $0.005$ and $1$, and numerically calculate the respective condition number of $K_{AA}\odot K_{XX} + n\lambda I$. For this dataset we observe a strong sensitivity of the condition number to the regularisation parameter around a small neighbourhood of $\lambda$  (see Fig. \ref{fig:ate combined plot kappa lambda scaling}). Given one can tune $\lambda$ directly, this provides a practical means to tune $\kappa$ to achieve favourable computational scaling.

\begin{figure}
    \centering
    \includegraphics[width=\linewidth]{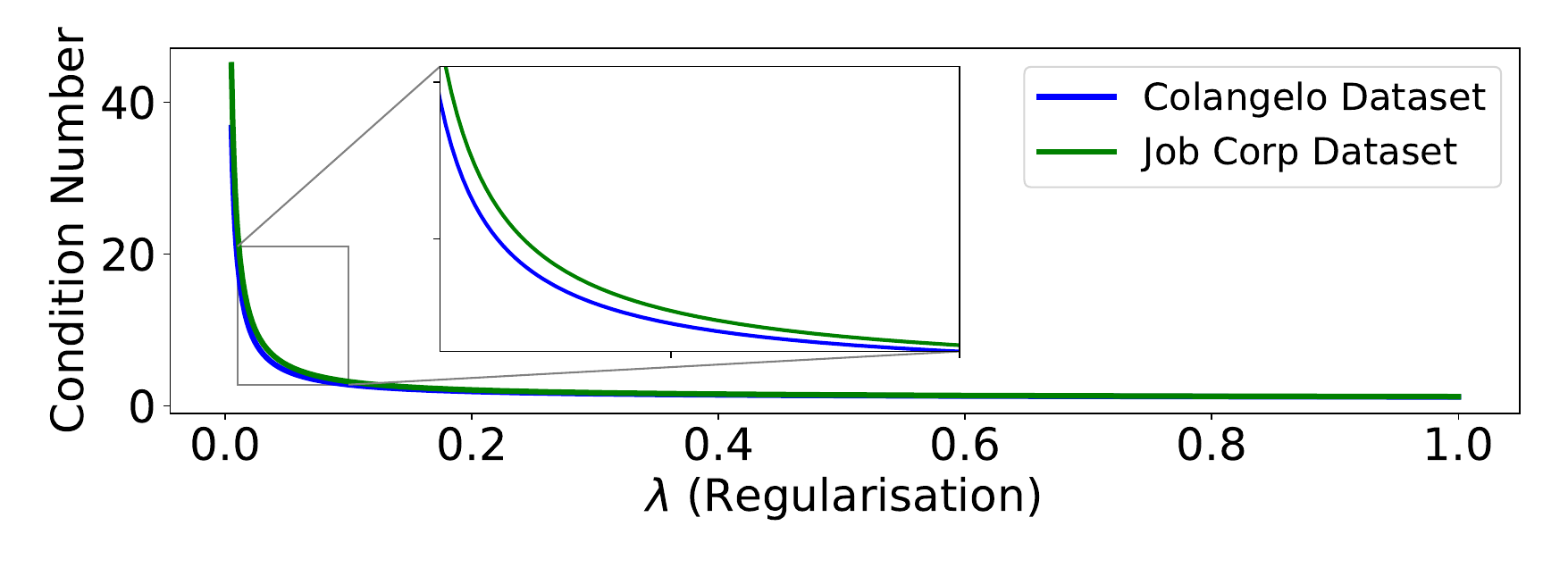}
    \caption{\justifying Empirical demonstration of the sensitive dependence between condition number $\kappa$ and regularisation $\lambda$ on Job Corp and Colangelo datasets for a small neighbourhood around $\lambda$. This indicates that one can use regularisation to tune the complexity scaling of the matrix inversion.}
    \label{fig:ate combined plot kappa lambda scaling}
\end{figure}

\subsection{Uniform Consistency and Convergence Rates of Quantum Algorithms}\label{Section: Uniform Consistency and Convergence Rates of Quantum Algorithms}

The quantum algorithm \ref{alg 1} introduces an additive error ($\epsilon$) due to the fact that the matrix inversion is approximate. Here we will consider whether this error leads to loss of convergence and uniform consistency guarantees. We show that one can simultaneously maintain tight uniform consistency bounds and retain the exponential speedup from matrix inversion. We extend our results to other causal estimands in appendix \ref{Section: Estimation Convergence with Quantum Algorithms}.

Application of an QLS algorithm yields a normalised quantum state $\ket{x} = \ket{A^{-1}b}$ to some additive error $\epsilon$. Formally, we write this as,

\begin{equation}
    \norm{{\ket{A^{-1}b}}_{quantum} - \ket{A^{-1}b}} \leq \norm{A^{-1}b} \epsilon\label{eqn:EpsilonError},
\end{equation}
where ${\ket{A^{-1}b}}_{quantum}$ represents the normalised approximate state found by the quantum inverse and ${\ket{A^{-1}b}}$ is {the ideal quantum representation of} the normalised exact classical vector. As expected, increasing the number of oracle calls reduces the additive error $\epsilon$. 
To achieve an error $\epsilon$, a minimum of $\mathcal{O}(\log(1/\epsilon))$ oracles calls is necessary {and arises from the filtering/projection stage}, as shown by the discrete-adiabatic theorem \cite{PRXQuantum.3.040303} and kernel refection~\cite{DalzellArXiv2024}.

Given exact classical linear system solvers have scaling $\mathcal{O}(n^3)$, an exponential speedup would require the quantum linear solver to run in $\mathcal{O}(\Polylog(n))$ {given a suitable downstream observable}~\cite{HarrowPRL2009}. However, the quoted runtime of quantum linear solvers depends on $\epsilon$ and $\kappa$. For an exponential advantage we thus require the scaling $\mathcal{O}(\log(1/\epsilon))$ be no greater than $\mathcal{O}(\Polylog(n))$, and therefore $\epsilon \in {\Omega}(e^{-\Polylog (n)})$, where $\Omega$ signifies a lower bound. %

Given this lower bound, we consider whether our quantum causal estimation algorithm still yields uniform consistency. We show that indeed in the asymptotic case, this additive error from a quantum inverse can be small enough to maintain uniform consistency, yet still give a speedup in the matrix inversion over classical methods.

\begin{theorem}
\label{consistency of ATE theorem}(Uniform Consistency For Quantum Causal Dose Response)\\
    Given algorithm \ref{alg 1}, there exists a function class for the additive error $\epsilon$ such that exponential speedup and uniform consistency guarantees are retained{, where a suitable downstream observable is specified}. Additionally, with this function class the asymptotic convergence rates remain the same as for the classical algorithm.
\end{theorem}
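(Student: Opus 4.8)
The plan is to control the total error $\|\hat{\theta}_{\mathrm{quantum}} - \theta_0\|_\infty$ by a triangle inequality that separates the purely quantum perturbation from the classical statistical error,
\begin{equation}
\|\hat{\theta}_{\mathrm{quantum}} - \theta_0\|_\infty \leq \|\hat{\theta}_{\mathrm{quantum}} - \hat{\theta}\|_\infty + \|\hat{\theta} - \theta_0\|_\infty .
\end{equation}
The second term is exactly the classical uniform-consistency bound of Eq.~(\ref{unif consistent equation ATE main text}), which I take as given. The whole argument therefore reduces to showing that the first term, induced by the additive inversion error $\epsilon$, can be driven to the same asymptotic rate while keeping $\log(1/\epsilon)$ polylogarithmic in $n$.

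First I would propagate the $\epsilon$-error of Eq.~(\ref{eqn:EpsilonError}) from the solution vector to the scalar estimand. Writing the estimate as the overlap $n^{-1} Y^T (A^{-1} b)$ with $A = K_{AA}\odot K_{XX} + n\lambda I$, Cauchy--Schwarz gives
\begin{equation}
|\hat{\theta}_{\mathrm{quantum}}(a) - \hat{\theta}(a)| \leq n^{-1}\,\|Y\|\,\|A^{-1}b\|\,\epsilon .
\end{equation}
I would then bound each factor uniformly in $a$: the regularisation forces $\|A^{-1}\| \leq (e_{\min}+n\lambda)^{-1} \leq (n\lambda)^{-1}$ exactly as in Eq.~(\ref{kappa lambda equation dependence}); boundedness of the kernel (e.g.\ $k\leq 1$ for the Gaussian kernel) bounds the entries of $b = K_{Aa}\odot\sum_{x_i}K_{Xx_i}$ independently of $a$, giving $\|b\| = \mathcal{O}(n^{3/2})$; and bounded outcomes give $\|Y\| = \mathcal{O}(n^{1/2})$. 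The powers of $n$ cancel, leaving a bound of the form $\|\hat{\theta}_{\mathrm{quantum}} - \hat{\theta}\|_\infty \leq C\,\lambda^{-1}\epsilon$ with $C$ an $n$-independent constant; crucially this holds uniformly in $a$, so it controls the sup-norm rather than a mere pointwise error.

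With this in hand I would identify the admissible function class for $\epsilon$. Since $\lambda$ is held fixed or allowed to decay only polynomially in $n$, the factor $\lambda^{-1}$ is at most polynomial, so choosing $\epsilon = \Theta(\lambda\, n^{-(c-1)/\{2(c+1/b)\}})$ makes the quantum perturbation decay at exactly the classical rate, preserving both uniform consistency and the asymptotic convergence rate. The key compatibility check is that this $\epsilon$ decays only polynomially, so that $\log(1/\epsilon) = \mathcal{O}(\log n)$, which is $\mathcal{O}(\Polylog(n))$; by the discrete-adiabatic oracle count $\mathcal{O}(\log(1/\epsilon))$ the solver still runs in $\mathcal{O}(\Polylog(n))$ and the exponential speedup survives, equivalently $\epsilon \in \Omega(e^{-\Polylog(n)})$ as required.

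The main obstacle I anticipate is the norm bookkeeping of the second step --- in particular establishing that every factor ($\|A^{-1}\|$, $\|b\|$, $\|Y\|$, and any normalisation constants absorbed when the quantum state is renormalised and its norm separately estimated) grows at most polynomially in $n$ and, where $a$ appears, is bounded uniformly over the treatment domain. Provided the cancellation of $n$-powers goes through as sketched, the existence of the required $\epsilon$-class is immediate, since polynomial decay is simultaneously fast enough for consistency and slow enough for a polylogarithmic oracle count.
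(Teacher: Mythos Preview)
Your proposal is correct and follows essentially the same route as the paper: split via the triangle inequality, control the quantum perturbation by Cauchy--Schwarz to obtain a bound of the form $\|Y\|\,\|A^{-1}b\|\,\epsilon$, observe this is independent of $a$ so it lifts to the sup-norm, and then argue that the required $\epsilon$-class (polynomial decay) lies in $\Omega(e^{-\Polylog(n)})\cap\mathcal{O}(n^{-(c-1)/\{2(c+1/b)\}})$. The only substantive difference is that you carry out the $n$-power bookkeeping explicitly---tracking $\|Y\|=\mathcal{O}(n^{1/2})$, $\|b\|=\mathcal{O}(n^{3/2})$, $\|A^{-1}\|\le(n\lambda)^{-1}$ and the $n^{-1}$ prefactor to get a clean $C\lambda^{-1}\epsilon$---whereas the paper simply asserts that $\|Y\|$ and $\|A^{-1}b\|$ ``are bounded and do not scale with $n$'' without justification; your version is the more careful of the two and also makes the $\lambda$-dependence (and hence compatibility with the optimal $\lambda\sim n^{-1/(c+1/b)}$) visible, which the paper's proof leaves implicit.
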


\begin{proof}

    We define $\hat{\theta}_{quantum}(a)$ as the quantum approximation of the classical estimand $\hat{\theta}(a)$, and we define $\theta_0(a)$ as the true dose response. We begin by noting that the absolute difference between the quantum estimation and the true value can be split by application of the Cauchy-Schwarz inequality. So with high probability we have,

\begin{align}
    &|\hat{\theta}_{quantum}(a) - \theta_0(a)|\\
    &= |\hat{\theta}_{quantum}(a) - \hat{\theta}(a) + \hat{\theta}(a) - \theta_0(a)|   \\
    &\leq|\hat{\theta}_{quantum}(a) - \hat{\theta}(a)| + |\hat{\theta}(a) - \theta_0(a)| \\
     &{\stackrel{Eq.\, (\ref{unif consistent equation ATE main text})}{\leq}}
    |\langle Y, A^{-1}b\rangle_{quantum} - \langle Y, A^{-1}b\rangle| + \mathcal{O}(n^{-\frac{1}{2}\frac{c-1}{c+1/b}}) \,\, \\
    &\leq 
    \norm{Y} \norm{\ket{A^{-1}b}_{quantum} -\ket{A^{-1}b}} + \mathcal{O}(n^{-\frac{1}{2}\frac{c-1}{c+1/b}})\\
    &{\stackrel{Eq.\, (\ref{eqn:EpsilonError})}{\leq}}
    \norm{Y} \norm{A^{-1}b} \epsilon + \mathcal{O}(n^{-\frac{1}{2}\frac{c-1}{c+1/b}})\label{ate consistency final eq no sampling error 1}
\end{align}
where we use the fact that $\norm{Y}$ and $\norm{A^{-1}b}$ are bounded and do not scale with $n$. As a result, equation \ref{ate consistency final eq no sampling error 1} is true for the supremum norm over $a$ as $\norm{Y},\norm{A^{-1}b}$ and $\epsilon$ are all independent of $a$. Hence, in the asymptotic limit, the worst case accuracy scaling of the quantum algorithm is determined by the greater of $\mathcal{O}(\epsilon)$ and $\mathcal{O}(n^{-\frac{1}{2}\frac{c-1}{c+1/b}})$. Firstly we note that $\epsilon$ converges to $0$ almost surely, hence retaining uniform consistency of the quantum algorithm. Secondly, one can always find an $n$ such that

\begin{equation}
  K_1 e^{-\Polylog (n)} < K_2 n^{-\frac{1}{2}\frac{c-1}{c+1/b}},  
\end{equation}
for any $K_1, K_2 \in \mathbb{R}$. This means that,

\begin{align}
    \Omega(e^{-\Polylog (n)}) \cap \mathcal{O}(n^{-\frac{1}{2}\frac{c-1}{c+1/b}}) \neq \phi\label{eq:NonZeroOverlap},
\end{align}
where $\phi$ represents the empty set. Therefore there exists a class of functions for which $\epsilon$ can both be in $\Omega(e^{-\Polylog (n)})$ (yielding exponential speedup {following selection of a suitable downstream observable}) and in $\mathcal{O}(n^{-\frac{1}{2}\frac{c-1}{c+1/b}})$ (yielding uniform consistency). This is illustrated for intuition in figure \ref{fig:converg2}.

\end{proof}

Note our proof uses the optimal classical scaling which has strong smoothness assumptions, if we relax this assumption, it may well be that the quantum algorithm has faster convergence rates.

\begin{figure}[!htbp]
    \centering
    \includegraphics[width=0.8\linewidth]{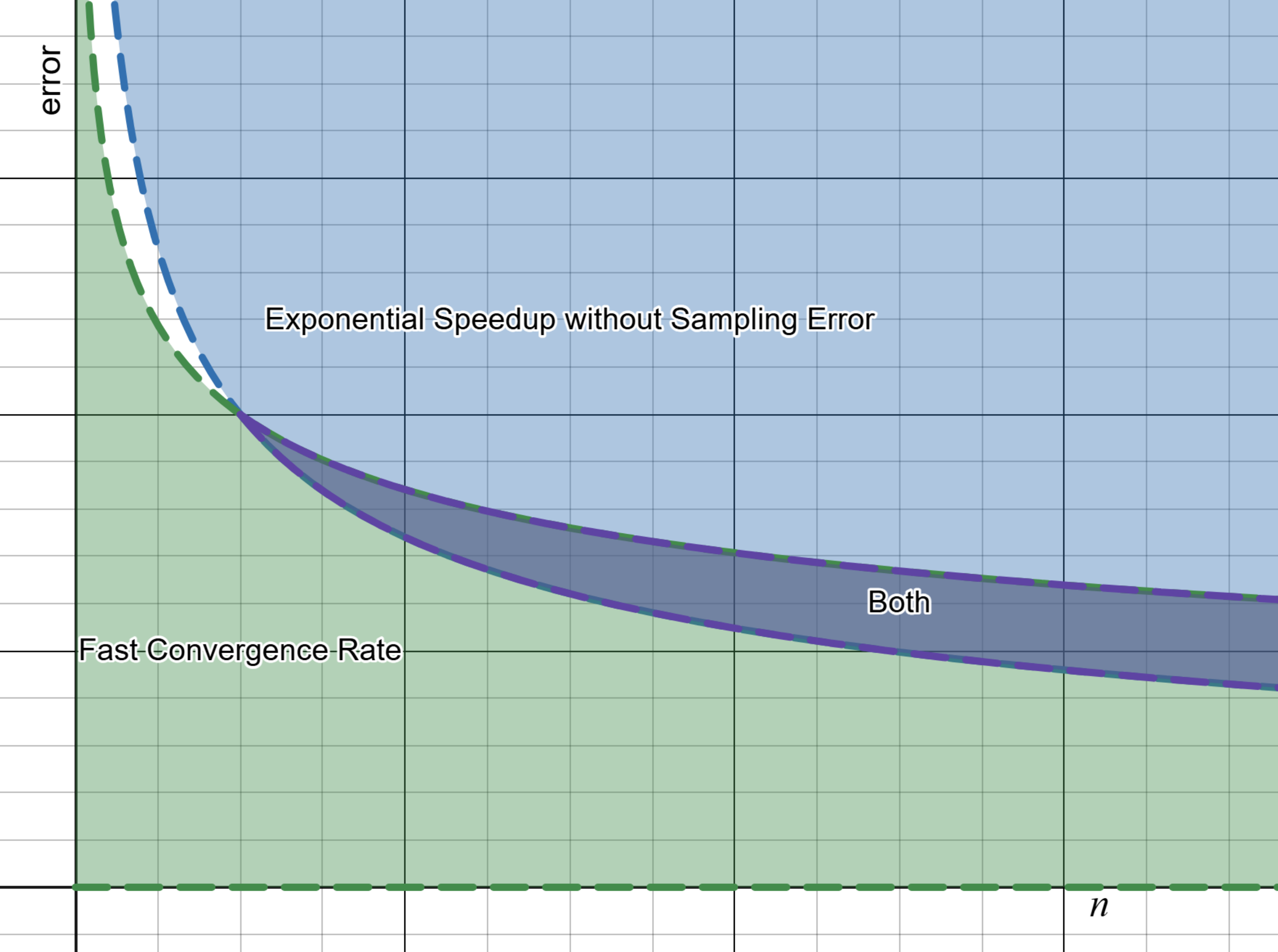}
    \caption{\justifying Illustration of the convergence rate function class against sample size.
    We see that within the blue region{---assuming a suitable downstream observable ---} one can achieve exponential speedup with $\log(1/\epsilon) = \mathcal{O}(\Polylog (n))$, which means that the error $\epsilon = \Omega(e^{-\Polylog (n)})$. This provides a lower bound on the error to maintain exponential speedup. In the green, one achieves fast convergence rate of $n^{-1/4}$ as determined via the best case classical uniform consistency convergence rate. In the purple region, we can achieve both a speedup and fast convergence as the additive error decreases fast enough. This is what is meant by the non empty subset where the functions classes overlap in Eq.~(\ref{eq:NonZeroOverlap}). Hence we can say that for large enough $n$, the algorithm maintains the same convergence rate and an exponential speedup.}
    \label{fig:converg2}
\end{figure}

\subsection{Convergence Rate with Quantum Measurement Error}\label{Section: Convergence Rate with Quantum Measurement Error}

Thus far we have considered the case where the quantum algorithm produces a quantum state proportional to the causal estimand. Such a situation is optimal when one wishes to utilise the quantum output directly as part of a downstream quantum enhanced workflow. If instead we wish to obtain a classical output, we must pay the cost of measurement. 

Mathematically, the estimation of an inner product via quantum measurements gives an error,

\begin{equation}\label{additive sampling error}
    \norm{\bra{X}\ket{Y} - \langle X, Y \rangle} \leq \epsilon_k,
\end{equation}
where $\bra{\cdot}\ket{\cdot}$ and $\langle \cdot, \cdot \rangle$ represent a quantum overlap estimation and an exact overlap, respectively. Current best methods (assuming no \emph{a priori} inference), such as the swap or Hadamard tests, require $1/\epsilon_k^2$ shots \cite{swap, McCleanNJP2016}. If we allow prior knowledge or feedback, we could use Bayesian inference or adaptive querying to achieve a $1/\epsilon_k$ scaling~\cite{HuszarPRA2012, FerriePRL2014}. Hence, at worst the query complexity of the quantum algorithm is multiplied by $1/\epsilon_k^2$. It is important to note that this limitation is not specific to the algorithms presented in this work, it applies to all quantum algorithms that rely on expectation value/kernel matrix evaluations. To retain the exponential speedup, we require the algorithm to run in $\mathcal{O}(\Polylog(n))$ and hence require $1/\epsilon_k^2 \in \mathcal{O}(\Polylog(n))$. Equivalently, this places a lower bound on the additive sampling error for exponential speedup of

\begin{equation}
     \epsilon_k \in \Omega (1/\Polylog (n)).
\end{equation}
Again, we consider how this affects uniform consistency and convergence rates (Theorem \ref{unif consistency with sampling error theorem}). 

\vspace{10pt}

\begin{theorem} \label{unif consistency with sampling error theorem}(Uniform Consistency of Dose Response with Sampling Error)\\
    Given algorithm \ref{alg 1}, there exists a function class for the additive error $\epsilon$ and sampling error $\epsilon_k$ such that the algorithms simultaneously retain exponential speedup and uniform consistency guarantees. However, there exists no function class for $\epsilon_k$ such that the asymptotic convergence rates remain the same as the classical algorithm.
\end{theorem}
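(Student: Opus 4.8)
The plan is to follow the same triangle-inequality decomposition used in the proof of Theorem~\ref{consistency of ATE theorem}, but now inserting a third error term coming from the finite-shot estimation of the overlap. Writing $\hat{\theta}_{sampled}(a)$ for the fully quantum-and-measured estimate, I would bound
\begin{align}
|\hat{\theta}_{sampled}(a) - \theta_0(a)| &\leq |\hat{\theta}_{sampled}(a) - \hat{\theta}_{quantum}(a)| \nonumber\\
&\quad + |\hat{\theta}_{quantum}(a) - \hat{\theta}(a)| \nonumber\\
&\quad + |\hat{\theta}(a) - \theta_0(a)|,
\end{align}
where the first term is controlled by the measurement bound~(\ref{additive sampling error}) and is $\mathcal{O}(\epsilon_k)$ up to a constant depending on $\norm{Y}$ and $\norm{A^{-1}b}$, the second by $\norm{Y}\norm{A^{-1}b}\epsilon$ exactly as in Eq.~(\ref{ate consistency final eq no sampling error 1}), and the third by the classical rate $\mathcal{O}(n^{-\frac{1}{2}\frac{c-1}{c+1/b}})$. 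Since $\norm{Y}$, $\norm{A^{-1}b}$, $\epsilon$ and $\epsilon_k$ are all independent of $a$, the bound passes to the supremum over $a$, reducing the whole question to an asymptotic comparison of the three scalar rates.

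For the first (positive) claim I would argue just as before: the exponential-speedup constraint forces $\epsilon \in \Omega(e^{-\Polylog(n)})$ and $\epsilon_k \in \Omega(1/\Polylog(n))$, yet both of these lower-bounded quantities still tend to zero as $n\to\infty$. Hence every term in the decomposition vanishes, uniform consistency is retained, and the intersection of the speedup-admissible class with the consistency-admissible class (any sequence tending to zero) is non-empty. This part is essentially a restatement of Theorem~\ref{consistency of ATE theorem} with the extra sampling term absorbed, so I expect it to be routine.

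The substantive content is the second (negative) claim, and this is where the proof diverges from Theorem~\ref{consistency of ATE theorem}. To match the classical convergence rate the dominant error term must decay at least as fast as $n^{-\frac{1}{2}\frac{c-1}{c+1/b}}$, so in particular one would require $\epsilon_k \in \mathcal{O}(n^{-\frac{1}{2}\frac{c-1}{c+1/b}})$. The key observation is that any polynomial decay is strictly faster than any polylogarithmic decay, i.e.
\begin{equation}
n^{-\frac{1}{2}\frac{c-1}{c+1/b}} = o\!\left(1/\Polylog(n)\right),
\end{equation}
so that the speedup requirement $\epsilon_k \in \Omega(1/\Polylog(n))$ and the rate-matching requirement $\epsilon_k \in \mathcal{O}(n^{-\frac{1}{2}\frac{c-1}{c+1/b}})$ have empty intersection,
\begin{equation}
\Omega(1/\Polylog(n)) \cap \mathcal{O}(n^{-\frac{1}{2}\frac{c-1}{c+1/b}}) = \emptyset .
\end{equation}
This is the precise structural contrast with Eq.~(\ref{eq:NonZeroOverlap}): there the $\log(1/\epsilon)$ oracle cost permitted an exponentially small $\epsilon$ that easily beat the polynomial statistical rate, whereas here the $1/\epsilon_k^2$ shot cost caps $\epsilon_k$ at a merely polylogarithmically small value that the polynomial rate outruns. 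I expect the main obstacle to be one of rigour rather than idea, namely confirming robustness to the best available measurement primitive: even the adaptive or Bayesian $1/\epsilon_k$ shot scaling only relaxes the speedup constraint to $\epsilon_k \in \Omega(1/\Polylog(n))$, leaving the empty-intersection argument intact. I would then conclude that whenever $\epsilon_k$ is small enough to match the classical rate the shot count is super-polylogarithmic and the exponential speedup is lost, so no admissible function class for $\epsilon_k$ can deliver both simultaneously.
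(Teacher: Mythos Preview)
Your proposal is correct and follows essentially the same route as the paper: the same three-term triangle-inequality split into sampling error $\epsilon_k$, QLS error $\norm{Y}\norm{A^{-1}b}\epsilon$, and classical statistical error $\mathcal{O}(n^{-\frac{1}{2}\frac{c-1}{c+1/b}})$, followed by the same empty-intersection argument $\Omega(1/\Polylog(n)) \cap \mathcal{O}(n^{-\frac{1}{2}\frac{c-1}{c+1/b}}) = \emptyset$ for the negative claim. The only cosmetic differences are that the paper bounds the sampling term by $\epsilon_k$ directly (without the $\norm{Y}\norm{A^{-1}b}$ prefactor you mention) and overloads the symbol $\hat{\theta}_{quantum}$ rather than introducing a separate $\hat{\theta}_{sampled}$; your explicit remark that the $1/\epsilon_k$ adaptive scaling still leaves $\epsilon_k \in \Omega(1/\Polylog(n))$ is a nice robustness check the paper leaves implicit.
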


\begin{proof}

Following from before, with the addition of equation \ref{additive sampling error}, we see that with high probability,

\begin{align}
    &|\hat{\theta}_{quantum}(a) - \theta_0(a)|\\ &= |\hat{\theta}_{quantum}(a) - \hat{\theta}(a) + \hat{\theta}(a) - \theta_0(a)|   \\
    &\leq|\hat{\theta}_{quantum}(a) - \hat{\theta}(a)| + |\hat{\theta}(a) - \theta_0(a)| \\
    &\leq 
    |\bra{Y}\ket{A^{-1}b}_{quantum} - \langle Y, A^{-1}b\rangle_{quantum} + 
    \langle Y, A^{-1}b\rangle_{quantum} \nonumber \\& - \langle Y, A^{-1}b\rangle| + \mathcal{O}(n^{-\frac{1}{2}\frac{c-1}{c+1/b}})\\
     &\leq 
    |\bra{Y}\ket{A^{-1}b}_{quantum} - \langle Y, A^{-1}b\rangle_{quantum}| \nonumber \\&+ 
    |\langle Y, A^{-1}b\rangle_{quantum} - \langle Y, A^{-1}b\rangle| + \mathcal{O}(n^{-\frac{1}{2}\frac{c-1}{c+1/b}})\\
    &\leq 
     \epsilon_k + 
    \norm{Y} \norm{\ket{A^{-1}b}_{quantum} -\ket{A^{-1}b}} + \mathcal{O}(n^{-\frac{1}{2}\frac{c-1}{c+1/b}})\\
    &\leq
     \epsilon_k +  
    \norm{Y} \norm{A^{-1}b} \epsilon + \mathcal{O}(n^{-\frac{1}{2}\frac{c-1}{c+1/b}}).\label{ate consistency final eq no sampling error}
\end{align}
Again this is true for the supremum norm and independent of $a$, meaning that the total asymptotic scaling is dependent on the greater of $\epsilon_k,\epsilon$ and  $\mathcal{O}(n^{-\frac{1}{2}\frac{c-1}{c+1/b}})$. All three terms converge to $0$ almost surely, implying uniform consistency. 

However in terms of convergence rates, we can always find an $n$ such that for any $c \in (1,2]$ and $b \in \mathbb{N}$,

\begin{equation}
    1/\Polylog (n) > \mathcal{O}(n^{-\frac{1}{2}\frac{c-1}{c+1/b}}).
\end{equation}
Hence, we can see that, 

\begin{equation}
    \Omega(1/\Polylog (n)) \cap \mathcal{O}(n^{-\frac{1}{2}\frac{c-1}{c+1/b}}) = \phi.
\end{equation}
Consequently, the sup norm convergence reduces to,

\begin{equation}
    |\hat{\theta}_{quantum}(a) - \theta_0(a)| \leq \epsilon_k \in \Omega(1/\Polylog (n)).
\end{equation}
In any practical application of this quantum algorithm one would take the smallest family for $\epsilon_k$, meaning that 

\begin{equation}
    |\hat{\theta}_{quantum}(a) - \theta_0(a)| \in \Theta(1/\Polylog (n)), \label{best case consistency}
\end{equation}
which does converge to $0$, albeit exponentially slower than the respective classical convergence rate. This means that our quantum algorithm retains uniform consistency, but has a slower convergence rate than the classical algorithm. However, in cases where one cannot assume a smooth function and a low effective RKHS dimension (e.g. c=1), how the classical and quantum convergence rates compare is an open question.  We illustrate this intuition in figure \ref{fig:converg1}. We also extend these results to other causal estimands in appendices \ref{Section: Estimation Convergence with Quantum Algorithms} and \ref{Incremental Functions, Counterfactual Distributions and Graphical Models}.

\end{proof}

\begin{figure}[!htbp]
    \centering
    \includegraphics[width=0.8\linewidth]{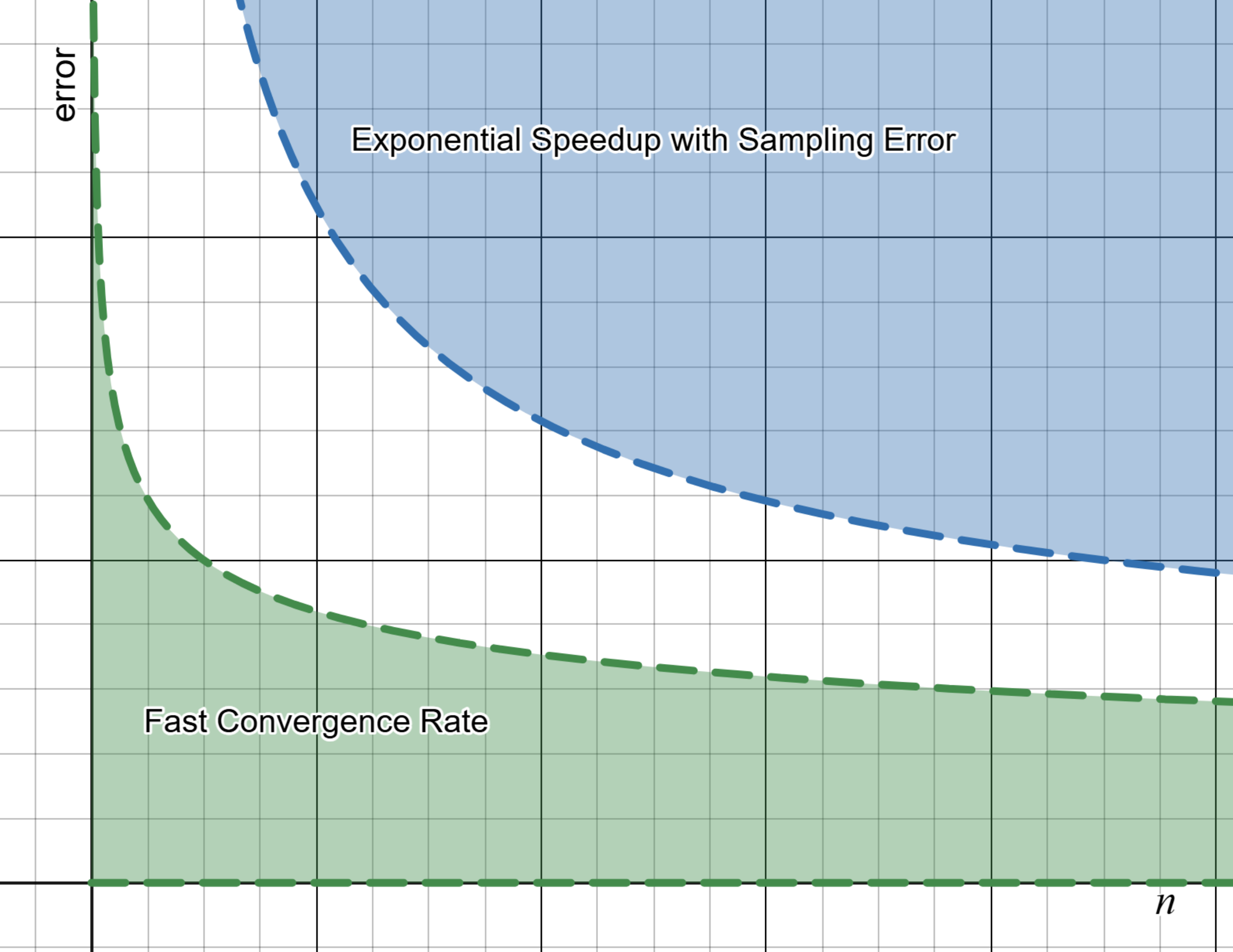}
    
    \caption{\justifying Illustration of the convergence rate function class against sample size that also considers sampling error. We see that the region for quantum exponential speedup (blue) is now disjoint from the classical fast convergence rate (green). This is because the minimum error we can permit with exponential speedup (blue) is always greater and decays slower than the maximum bound on classical fast convergence rate (green). Note this is for the best case classical algorithm where we can assume strong smoothness properties of the function to be learned and low effective RKHS dimension. This may not be true when this smoothness assumption is invalid.}
    \label{fig:converg1}
\end{figure}

\section{Discussion}\label{Section: Discussion}

In this work, we have taken the first steps towards integrating quantum computing techniques into the domain of causal inference, a key sub-field of machine learning that focuses on predicting the outcomes of interventions. Our primary contribution is the development of a quantum-assisted algorithm for a non-parametric, kernel-based causal estimator, which traditionally relies on complex matrix inversion and kernel evaluation. By adapting these methods to leverage quantum linear system solvers, we have shown that, under the assumption of efficient block encoding and state preparation, it is possible to retain the uniform consistency and convergence guarantees of classical algorithms while also achieving computational advantages.

A central finding is that the quantum variants of these estimators inherit the strong theoretical properties of their classical counterparts. Specifically, we demonstrated that if the underlying causal relationships and data-generating mechanisms are sufficiently smooth, the quantum algorithm can match the best-known classical convergence rates. Moreover, the quantum linear algebra subroutines we employed offer the potential for exponential speedups in computational complexity, compared to the polynomial scaling of classical matrix inversion techniques.

That said, our analysis also highlights some of the limitations and trade-offs in practical applications. For example, extracting classical values from quantum states inevitably introduces sampling (measurement) errors that can slow down the effective convergence rate. Thus, the full quantum advantage may be tempered by the resources needed to read out the results. It is possible that in the case when $c<2$ in Theorem~\ref{unif consistency with sampling error theorem}, there could be cases where the quantum algorithm has the same convergence rate as the classical algorithm, something that will be addressed in future work. Nonetheless, our results point to a promising avenue: if the downstream application can utilize quantum state outputs directly—e.g., as part of a larger quantum-enhanced workflow—then these measurement overheads can be minimised or avoided. For example, recent promising work~\cite{CatliArXiv2025, ChakrabartiArXiv2025} on encoding difficult optimisation problems directly within quantum systems, solving the desired problem coherently, without the need of a classical optimsiation step, as in QAOA~\cite{BlekosPhysRep2024}, suggest that it is possible to utilise quantum states as inputs for follow-on (quantum) processes in the quantum machine learning pipeline.

Our results transfer to other causal algorithms including dose response with distributional shift, average treatment on treated and conditional average treatment effect as well as other problem classes such as incremental functions, counterfactual distributions and graphical models (see Appendix \ref{Incremental Functions, Counterfactual Distributions and Graphical Models}). 

In addition, our framework involves a single hyperparameter that indirectly controls the conditioning of the kernel matrix, thereby directly determining the quantum algorithm’s scaling properties. This interplay between hyperparameter tuning, condition number, and algorithmic complexity is an interesting insight that we will explore in future work.

Overall, our work serves as a proof-of-concept that quantum computing can enhance the scalability of advanced causal inference methods without compromising their theoretical performance guarantees. It opens the door to future research on a range of open questions: What classes of causal structures are most amenable to quantum speedups? Will similar theoretical guarantees hold for more complex causal models? We hope this work will stimulate ongoing innovation at the intersection of causal inference and quantum machine learning.

\begin{acknowledgments}
This work has been supported by the Australian Research Council (ARC) Centre of
Excellence for Engineered Quantum Systems (EQUS, CE170100009). We would like to thank Riddhi Gupta and Dominic Berry for helpful discussions.
\end{acknowledgments}

\bibliography{citations}%

\onecolumngrid

\appendix

\newpage

\section*{Appendices}

\emph{Note that Appendices A, B and C are not original work and are included for pedagogical reasons to ensure the paper is self-contained. Appendices D, E and F contain original work and proofs to support the claims of the paper.}

\section{Relevant Background For Classical Causal Inference}\label{Section: Causal Structure}

The field of causality, particularly the study of necessary requirements to \textit{infer} causal relationships, has been mathematically grounded \cite{JudeaPearl1995, Sprites/Glymour, Richardson2013SingleWI, RubinPotentialOutcomes}. One can formally define a causal effect as follows.

\begin{definition}(Causal Effect)\label{Causal Effect}\\
    We define a random variable $X$ having a direct causal effect on another random variable $Y$ if two distinct interventions (i.e. setting $X=x,x'$) have different resulting distributions for $Y$. Colloquially, one may interpret this as a statement formalising that forcing changes in $X$ results in meaningful changes in $Y$. 
\end{definition}

A graphical approach to visualise causal relationships provides a compact way to encode such causal dependencies between random variables. These models are based on directed acyclic graphs (DAGs), where nodes represent random variables and edges represent causal dependence (Definition \ref{Causal Effect}). In DAGs, each node has a set of parents and descendents. We define parents of $X$ as the nodes which have a causal effect on $X$ and descendents as all nodes for which $X$ has a direct or indirect causal effect on.
Part of this definition of causality is the Markovian assumption, stating that the probability distribution of nodes are conditionally independent of their non-descendents given their parents. This is equivalent to factorising the joint probability distribution as

\begin{equation}
    \P(X_1,\dots,X_n) = \prod_{i=1}^n \P(X_i|\textbf{PA}_i),
\end{equation}
where $\textbf{PA}_i$ is the set of parents of $X_i$ or equivalently the nodes of direct causes of $X_i$. The factorization of joint probability distributions in DAGs naturally leads to the potential outcomes framework, particularly as developed by James Robins for causal inference in complex settings \cite{Richardson2013SingleWI}. Since DAGs encode causal relationships through conditional independence assumptions, they provide a structural foundation for reasoning about interventions. The potential outcomes framework formalizes causal effects by considering counterfactual scenarios—what would have happened had a different treatment or action been taken. This allows us to compare causal effects by asking how under precisely the same conditions, two different treatments would have differed. The fundamental problem of causal inference is that you can only ever observe one of the two treatments for any particular scenario. As such, causal estimands must consider averages over data and infer effects using similarities between data points.

To illustrate how such causal principles must be used to estimate quantities of interest we consider a medical scenario: a hospital must decide whether to provide high flow oxygen to all future Covid positive patients presenting to the emergency department ($A=a$), or only use more conservative (and less costly) measures such as antivirals ($A=a'$). The goal is to understand the relationship between these treatment options and possible outcomes of recovery ($Y=y$) or death ($Y=y')$. It is likely that the relationship is confounded by illness severity—patients who receive high flow oxygen are typically very unwell ($X=x$) and those who only receive conservative measures usually have a milder form of the illness ($X=x'$).
Graph \ref{fig:3_node_structure} depicts this scenario. The expected outcome is a function of both the disease severity and the treatment option, and can naively be calculated from the observational data according to 
\begin{align}
\mathbb{E}[Y \mid A=a] = \sum_x \mathbb{E}[Y \mid a,x]~p(x \mid a).
\end{align}

\begin{figure}[!htbp]
    \centering
    \begin{minipage}{0.4\textwidth}
        \centering
        \begin{tikzpicture}
    \node[circle, draw, minimum size=1cm] (A) at (0, 0) {A};
    \node[circle, draw, minimum size=1cm] (X) at (2, 3) {X};
    \node[circle, draw, minimum size=1cm] (Y) at (4, 0) {Y};

    \draw[->, line width=1mm, >=stealth, scale=2] (X) -- (A);
    \draw[->, line width=1mm, >=stealth, scale=2] (X) -- (Y);
    \draw[->, line width=1mm, >=stealth, scale=2] (A) -- (Y);
\end{tikzpicture}
        \caption{\justifying 3 node confounder causal directed acyclic graph. Here $X$ represents the confounders, all random variables that have a causal effect on both the treatment $A$ and the outcome $Y$. There is also a causal link between $A$ and $Y$. These confounder structures frequently result in Simpson's paradox.}
        \label{fig:3_node_structure}
    \end{minipage}%
    \hspace{0.05\textwidth} %
    \begin{minipage}{0.4\textwidth}
        \centering
        \begin{tabular}{|c|c|}
            \hline
            Quantity & Outcome \\
            \hline
            $\Pr(Y = \text{recovery}|A = \text{Antivirals})$ & $0.9$ \\
            $\Pr(Y = \text{recovery}|A = \text{Oxygen})$ & $0.8$ \\
            \hline
        \end{tabular}
        \caption{\justifying Probabilities of outcomes given the treatments $A$. Here we see that the data suggests that antivirals perform better than high flow oxygen. }
        \label{probabilities of outcomes}
    \end{minipage}%
\end{figure}

We see in table \ref{probabilities of outcomes} the corresponding calculated probabilities imply a good strategy might be to prohibit the use of high flow oxygen. However, we can also re-calculate the probabilities of each outcome given such an intervention were to occur. In order to do so, we use the SWIG graphical theory of \cite{Richardson2013SingleWI} that unifies causal directed acyclic graphs (DAGs) and potential (aka counterfactual) outcomes via a node-splitting transformation (see figure \ref{fig:3_node_structure_split}). We represent the intervention by a split node, where $A$ represents the original observed variable, and $a$ represents the interventional situation where $A=a$. In this case we calculate the probability of the expected outcome as
\begin{align}
\mathbb{E}[Y \mid A=a] = \sum_x  \mathbb{E}[Y \mid a,x]~p(x).
\end{align}
From figure \ref{probabilities of outcomes intervention}, we see the adjusted probabilities have reversed in magnitude, a phenomenon known as Simpson's paradox. It is these second set of interventionist expected outcomes that we wish to estimate via measured data on treatments, outcomes and covariates of some population. This particular scenario is the estimation of $\theta_0(a)$, the expected outcome of recovery given all Covid positive patients were given high flow oxygen.

\begin{figure}[!htbp]
    \centering
    \begin{minipage}{0.4\textwidth}
        \centering
        
\begin{tikzpicture}
    \node[half circle, draw, minimum size=0.5cm, rotate=180] (A) at (0, 0.1) {\rotatebox{-180}{A}};
    \node[half circle, draw, minimum size=0.54cm] (a) at (0, 0) {a};
    \node[circle, draw, minimum size=1cm] (X) at (2, 3) {X};
    \node[circle, draw, minimum size=1cm] (Y) at (4, 0) {Y};

    \draw[->, line width=1mm, >=stealth, scale=2] (X) -- (A);
    \draw[->, line width=1mm, >=stealth, scale=2] (X) -- (Y);
    \draw[->, line width=1mm, >=stealth, scale=2] (a) -- (Y);
\end{tikzpicture}
        \caption{\justifying 3 node confounder structure under the intervention $A=a$. We split the node $A$ into two and hence remove the causal link from $X$ to the intervention variable $a$.}
        \label{fig:3_node_structure_split}
    \end{minipage}%
    \hspace{0.05\textwidth} %
    \begin{minipage}{0.4\textwidth}
        \centering
        \begin{tabular}{|c|c|}
            \hline
            Quantity & Outcome \\
            \hline
            $\Pr(Y^{(\text{Antivirals})} = \text{recovery})$ & $0.6$ \\
            $\Pr(Y^{(\text{Oxygen})} = \text{recovery})$ & $0.7$ \\
            \hline
        \end{tabular}
        \caption{\justifying Probabilities of recovery given an intervention were to occur with the treatment specified in the superscript. We see here the probabilities have reversed and the data suggests high flow oxygen is in fact a better intervention. However, also note that both probabilities are lower and suggest that neither intervention is better than maintaining the current treatments. }
        \label{probabilities of outcomes intervention}
    \end{minipage}%
\end{figure}

While the main text considered only dose response, in these appendices we consider the more general family of estimands made up from four causal functions:

\begin{enumerate}
    \item Dose response: $\theta_0(a)=\mathbb{E}\{Y^{(a)}\}$.
     \item Dose response with distribution shift: $ \theta_0^{DS}(a,\tilde{\text{\normalfont P}})=\mathbb{E}_{\tilde{\text{\normalfont P}}}\{Y^{(a)}\}$. 
    \item Conditional response: $ \theta_0^{ATT}(a,a')=\mathbb{E}\{Y^{(a')} \mid A=a\}$.
     \item Heterogeneous response: $\theta_0^{CATE}(a,v)=\mathbb{E}\{Y^{(a)} \mid V=v\}$.
\end{enumerate}
The superscripts (DS, ATT, CATE) denote historically well-known alternative parametric estimators of the same causal quantities: Distribution Shift, Average Treatment on Treated, Conditional Average Treatment Effect. \emph{Dose response with distributional shift} estimates the counterfactual mean outcome given treatment $A=a$ for an alternative population with data distribution $\tilde{\text{\normalfont P}}$. This estimand is valuable in the study of machine learning concepts like transfer learning, distributional shift and covariate shift \cite{datasetshift}. \emph{Conditional response} estimates the counterfactual mean outcome given a treatment $A=a'$ for the sub-population who actually received treatment $A=a$. \emph{Heterogeneous response} estimates the counterfactual mean outcome given treatment $A=a$ for the sub-population with a particular interpretable sub-covariate value $V=v$. This final quantity allows one to ask what causal effects can be expected for individuals in the population with a particular characteristic.

Continuing with our illustrative example, $\theta_0^{DS}(a)$ is the same as $\theta_0(a)$ but instead considers the outcome on a new population, e.g. we have data from Australia but wish to determine the expected outcome of an intervention in America. $\theta_0^{ATT}(a,a')$ is the expected outcome for the population who received the less costly treatment ($a'$) if we instead gave them high flow oxygen ($a$). Lastly, $\theta_0^{CATE}(a,v)$ is the outcome if the patients who have a certain characteristic, such as those over the age of $60$ ($v$), are all given high flow oxygen. The causal structure DAG for CATE is given in figure \ref{fig:CATE DAG}.

\begin{figure}[!htbp]
    \centering
    \begin{subfigure}{0.45\textwidth}
        \centering
        \begin{tikzpicture}
            \node[circle, draw, minimum size=1cm] (A) at (0, 0) {A};
            \node[circle, draw, minimum size=1cm] (X) at (2, 3) {X};
            \node[circle, draw, minimum size=1cm] (Y) at (4, 0) {Y};
            \node[circle, draw, minimum size=1cm] (V) at (4, 3) {V};

            \draw[->, line width=1mm, >=stealth, scale=2] (X) -- (A);
            \draw[->, line width=1mm, >=stealth, scale=2] (X) -- (Y);
            \draw[->, line width=1mm, >=stealth, scale=2] (A) -- (Y);
            \draw[->,line width=0.3mm,dashed] (X) -- (V);
            \draw[->,line width=0.3mm,dashed] (V) -- (X);
            \draw[->, line width=1mm, >=stealth, scale=2] (V) -- (A);
            \draw[->, line width=1mm, >=stealth, scale=2] (V) -- (Y);
        \end{tikzpicture}
        \caption{}
        \label{fig:dag_without_split}
    \end{subfigure}
    \hfill
    \begin{subfigure}{0.45\textwidth}
        \centering
        \begin{tikzpicture}
            \node[half circle, draw, minimum size=0.5cm, rotate=180] (A) at (0, 0.1) {\rotatebox{-180}{A}};
            \node[half circle, draw, minimum size=0.54cm] (a) at (0, 0) {a};
            \node[circle, draw, minimum size=1cm] (X) at (2, 3) {X};
            \node[circle, draw, minimum size=1cm] (Y) at (4, 0) {Y};
            \node[circle, draw, minimum size=1cm] (V) at (4, 3) {V};

            \draw[->, line width=1mm, >=stealth, scale=2] (X) -- (A);
            \draw[->, line width=1mm, >=stealth, scale=2] (X) -- (Y);
            \draw[->, line width=1mm, >=stealth, scale=2] (a) -- (Y);
            \draw[->,line width=0.3mm,dashed] (X) -- (V);
            \draw[->,line width=0.3mm,dashed] (V) -- (X);
            \draw[->, line width=1mm, >=stealth, scale=2] (V) -- (A);
            \draw[->, line width=1mm, >=stealth, scale=2] (V) -- (Y);
        \end{tikzpicture}
        \caption{}
        \label{fig:dag_with_split}
    \end{subfigure}
    \caption{\justifying DAG for CATE showing the relationships between variables X, A, Y, and V  (a) and under the intervention $A=a$ (b). Dashed arrow represents a dependency that is not causal in nature as $V$ is a subset of $X$.}
    \label{fig:CATE DAG}
\end{figure}
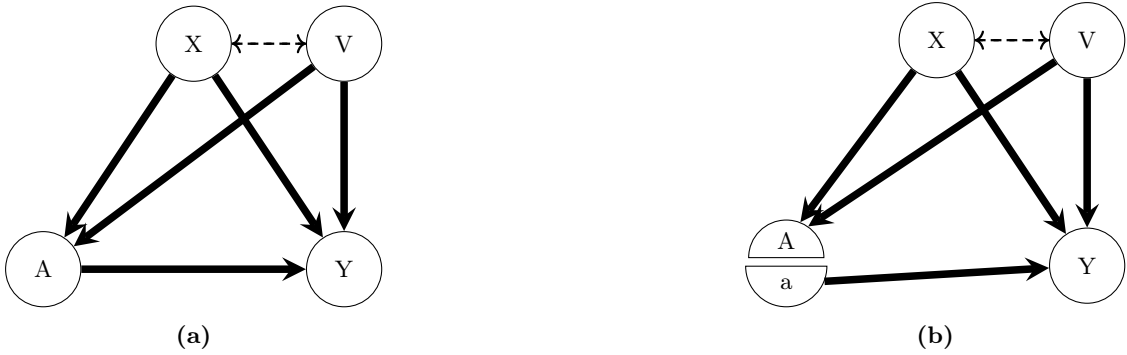

\section{Derivation of Classical Causal Inference Algorithms}\label{Section: Classical Causal Inference Proofs}

Three intuitive \emph{a priori} assumptions underlie causal methods. Here we explain them with further description:

\begin{enumerate}
    \item \emph{No interference}: if $A=a$ then $Y=Y^{(a)}$. That is, the outcome $Y$ for an individual in the population is not affected by the treatments of any other individual in the population.
    \item \emph{Conditional exchangeability}: $\{Y^{(a)}\} \indep A | X$. This implies all important causal factors are included in the covariates $X$, thus if the treatment groups were swapped, we would get the same value for $\theta_0$. %
    \item \emph{Overlap}: if $f(x) > 0$ then $f(a | x) > 0$, where $f(x)$ and $f(a | x)$ are densities. This guarantees that there is no covariate sub-stratum such that treatment has restricted support. Effectively, all treatments are a priori possible for all individuals in the population.  
\end{enumerate}

To estimate dose response with distribution shift, $ \theta_0^{DS}(a,\tilde{\text{\normalfont P}})$, we require two further assumptions:
\begin{enumerate}
\item $\tilde{\text{P}}(Y, A, X)$ = P$(Y | A, X)\tilde{\text{P}}(A, X)$. This implies that even though the marginal distribution of $A$ and $X$ might change in the shifted population, the way $Y$ causally depends on $A$ and $X$ does not. 
\item 
$\tilde{\text{P}}(A, X)$ is absolutely continuous with respect to P$(A, X)$. This implies the set of possible values for $\tilde{P}$ is contained within $P$, i.e. we can not observe $X$ and $A$ in the new distribution that were impossible in the prior distribution.
\end{enumerate}

Given the above assumptions, one can write the four causal functions as integrals:
\begin{align}
    \theta_0(a)  &= \int E(Y | A = a, X = x) d \text{P}(x) \label{ATE eq 1}\\
    \theta^{DS}_0(a) &= \int E(Y | A = a, X = x) d \tilde{\text{P}}(x)\\
    \theta^{ATT}_0(a,a')  &= \int E(Y | A = a, X = x) d \text{P}(x|a') \\
    \theta^{CATE}_0(a,v) &= \int E(Y | A = a, V=v, X = x) d \text{P}(x|v), \label{CATE eq 1}
\end{align}    
where $\text{P}(x)$ is the probability distribution for the covariates $x \in X$. For brevity we denote $\gamma_0(a,x) = E(Y|A=a,X=x)$, and $\gamma_0(a,v,x) = E(Y|A=a,V=v,X=x)$.

While many causal estimators make parametric assumptions on the form of the non-linear regression function $\gamma_0(a,v,x)$, and conditional function $\gamma_0(a,x)$,  these assumptions can be relaxed by using a Reproducing Kernel Hilbert Space (RKHS) as a hypothesis space. An attractive feature of doing so is that one can then learn $\gamma_0$ using a kernel ridge regression (KRR) estimator to derive closed form solutions for all four estimands with uniform consistency guarantees \cite{grettonkernelcausal}. 

We now introduce the basics of RKHS theory and direct the reader to  \cite{grettonkernelcausal, schoelkopf2002learning} for further detail. A RKHS is characterised by its feature map which takes a point $x$ in the original data space $X$ (which need not take any particular form and can include spaces of graphs, images, text etc) and maps it to a feature $\phi(x)$ in the RKHS $\mathcal{H}$. The closure of the span$\{\phi(x)\}$ defines the RKHS, such that the $\{\phi(x)\}_{x \in \mathcal{X}}$ form basis functions for the RKHS. The \emph{kernel} is then defined as the symmetric, continuous, positive definite function $k(x,x') = \langle \phi(x), \phi(x')\rangle_\mathcal{H}$. A key property of an RKHS is the reproducing property: to evaluate a function $\gamma_0$ at $x$, one can take the inner product between $\gamma_0$ and $\phi(x)$.
\begin{align}
\gamma_0(x) = \langle \gamma_0, \phi(x) \rangle
\end{align}
Finally, due to the representer theorem, the function we are trying to learn will lie in the span of the training data: $ \gamma_0 = \sum_{i=1}^n \alpha_i\phi(x_i)$. Given these facts, a KRR estimator for $\gamma_0$ can be written in terms of an RKHS inner product:

\begin{equation}\label{eq:cef_loss}
\hat{\gamma_0}=\argmin_{\gamma_0 \in\mathcal{H}} {\left[n^{-1}\sum_{i=1}^n \{Y_i-\langle\gamma_0,\phi(X_i)\rangle_{\mathcal{H}}\}^2 + \lambda \|\gamma_0\|^2_{\mathcal{H}}\right]},
\end{equation}
where $\lambda>0$ is a hyperparameter that tunes the ridge penalty $\|\gamma_0\|^2_{\mathcal{H}}$ and promotes smoothness in estimation. Furthermore, the solution to this optimization problem can be written in closed form \cite{kimeldorf1971some}:
\begin{equation}\label{eq:cef_form}
\hat{\gamma}(x)=Y^{\top}(K_{XX}+n\lambda  I )^{-1}K_{Xx},
\end{equation}
where $K_{XX}\in\mathbb{R}^{n\times n}$ is the kernel matrix, with $(i,j)$th entry $k(x_i,x_j)$, and $K_{Xx}\in\mathbb{R}^n$ is the kernel vector, with $i$th entry $k(x_i,x)$.

Following from equations \ref{ATE eq 1} - \ref{CATE eq 1}, we specify two further assumptions related to the existence of a lernable regression $\gamma_0$ and choice of kernel to yield a non-parametric equation to estimate $\theta_0$. 

First, to use kernel methods we must assume that the expectation values of outcomes, given specific treatments are learnable regressions. Formally, this means they exist in some Reproducing Kernel Hilbert Space given by the product of decoupled kernels.

\begin{assumption}\label{learnable} (Learnable regression $\gamma_0$)\\
We assume that $\gamma_0(a,x)$ is an element of the RKHS with kernel $k(a,x;a',x') = k_D(a,a')k_X(x,x')$, giving the tensor product RKHS $\mathcal{H} = \mathcal{H}_D \otimes \mathcal{H}_X$. And similarly $\gamma_0(a,v,x) \in \mathcal{H}_D \otimes \mathcal{H}_V \otimes \mathcal{H}_X$.  
\end{assumption}

This allows us to write $\gamma_0(d, x) = \bra{\gamma_0}\ket{\phi(a)\otimes \phi(x)}_{\mathcal{H}}$ by the reproducing property. And similarly $\gamma_0(a,v, x) = \bra{\gamma_0}\ket{\phi(a)\otimes\phi(v)\otimes \phi(x)}_{\mathcal{H}}$. This decouples the feature maps of the components $X,D$ (and $V$ for CATE).

We next must assume properties of such kernels so that we can accurately use them to estimate $\theta_0$.

\begin{assumption}\label{regularity} (RKHS regularity assumptions)
   \begin{enumerate}
       \item  $k_\mathcal{A}, k_\mathcal{V}, k_\mathcal{X}$ (and $k_\mathcal{Y}$) are continuous and bounded. Formally, $\sup_{a\in\mathcal{A}} \|\phi(a)\|_{\mathcal{H}_\mathcal{A}} \leq m_a$,
  $\sup_{v\in\mathcal{V}} \|\phi(v)\|_{\mathcal{H}_\mathcal{V}} \leq m_v$, $\sup_{x\in\mathcal{X}} \|\phi(x)\|_{\mathcal{H}_\mathcal{X}} \leq m_x$ {and $\sup_{y\in\mathcal{Y}} \|\phi(y)\|_{\mathcal{H}_\mathcal{Y}} \leq m_y$}. This is for Bochner integrability and the existence of mean embedding.
\item $\phi(a), \phi(v), \phi(x)$ {and $\phi(y)$} are measurable. 
\item $k_\mathcal{X}$ (and $k_\mathcal{Y}$) are characteristic. This is so that mean embeddings are injective.
   \end{enumerate}
\end{assumption}

Assumptions \ref{regularity}.1 and \ref{regularity}.2 are weak assumptions on kernels and are satisfied for all common choices of kernels. Assumption \ref{regularity}.3 is more stringent and requires more careful choice of kernel. We note that this is the only assumption on the form or properties of the kernel, all other assumptions surround the problem formulation and the true distributions we are trying to learn. Practically this implies mild constraints on the choice of kernel function and instead suggests that the kernel form should be designed to match that of the data structure or any apriori knowledge of the problem.

Given all prior assumptions, we arrive at lemma \ref{kernalising response} where we can write the counterfactual mean response as a kernel value including the mean feature embeddings - guaranteed to be injective via the characteristic property of $k_\mathcal{X}$. 

\begin{lemma}\label{kernalising response}(Kernalising Estimands)\\
\begin{align}
    \theta_0(a) &= \int \mathbb{E}(Y | A = a, X = x) d \text{P}(x) \nonumber \\
    &=\bra{\gamma_0}\ket{\phi(a)\otimes \mu_x}_{\mathcal{H}}\\
    \theta_0^{DS}(a) &= \bra{\gamma_0}\ket{\phi(a)\otimes \nu_x}_{\mathcal{H}}\\
    \theta_0^{ATT}(a,a') &= \bra{\gamma_0}\ket{\phi(a) \otimes \mu_x(a')}_{\mathcal{H}}\\
    \theta_0^{CATE}(a,v) &= \bra{\gamma_0}\ket{\phi(a)\otimes \phi(v) \otimes \mu_x(v)}_{\mathcal{H}},
\end{align}
where $\mu_x = \int \phi(x) d \text{P}(x),\nu_x = \int \phi(x) \tilde{\text{P}}(x) $ and $\mu_x(\cdot) = \int \phi(x) d \text{P}(x|\cdot)$.
\end{lemma}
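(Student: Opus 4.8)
The plan is to reduce all four identities to a single computation: rewrite each integral representation in Eqs.~(\ref{ATE eq 1})--(\ref{CATE eq 1}) as an RKHS inner product by exchanging the integration with the inner product against $\gamma_0$, and then recognise the integrated feature map as a mean embedding. I would carry out the argument in full for $\theta_0(a)$; the remaining three cases are structurally identical, differing only in which measure drives the embedding and how many tensor factors are held fixed outside the integral.

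First I would invoke Assumption~\ref{learnable}, which places $\gamma_0$ in the tensor-product RKHS $\mathcal{H}_D\otimes\mathcal{H}_X$, together with the reproducing property, to write the integrand pointwise as
\begin{equation}
\mathbb{E}(Y\mid A=a, X=x) = \gamma_0(a,x) = \langle \gamma_0,\, \phi(a)\otimes\phi(x)\rangle_{\mathcal{H}},
\end{equation}
so that $\theta_0(a) = \int \langle \gamma_0,\, \phi(a)\otimes\phi(x)\rangle_{\mathcal{H}}\, d\text{P}(x)$. The substantive step, and the main obstacle, is justifying the interchange
\begin{equation}
\int \langle \gamma_0,\, \phi(a)\otimes\phi(x)\rangle_{\mathcal{H}}\, d\text{P}(x) = \Big\langle \gamma_0,\, \int \phi(a)\otimes\phi(x)\, d\text{P}(x)\Big\rangle_{\mathcal{H}}.
\end{equation}
Because $\langle\gamma_0,\cdot\rangle_{\mathcal{H}}$ is a bounded linear functional, this exchange is exactly the statement that a continuous linear functional commutes with Bochner integration, which holds provided the vector-valued integral $\int \phi(a)\otimes\phi(x)\, d\text{P}(x)$ exists as a Bochner integral.

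To secure Bochner integrability I would appeal to Assumption~\ref{regularity}.1 (boundedness of the kernels) together with Assumption~\ref{regularity}.2 (measurability of the feature maps): boundedness gives $\|\phi(a)\otimes\phi(x)\|_{\mathcal{H}} = \|\phi(a)\|_{\mathcal{H}_\mathcal{A}}\,\|\phi(x)\|_{\mathcal{H}_\mathcal{X}} \leq m_a \kappa_x < \infty$ uniformly, so the integrand is Bochner integrable against the probability measure $\text{P}$. This is precisely the role the excerpt assigns to Assumption~\ref{regularity}.1 (``for Bochner integrability and the existence of mean embedding''). I anticipate this to be the only genuinely technical point; the rest is bookkeeping.

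Finally, since $\phi(a)$ does not depend on $x$ and the tensor product is bilinear and continuous, I would pull it outside the integral to obtain $\int \phi(a)\otimes\phi(x)\, d\text{P}(x) = \phi(a)\otimes \int\phi(x)\, d\text{P}(x) = \phi(a)\otimes\mu_x$, giving $\theta_0(a) = \langle \gamma_0,\, \phi(a)\otimes\mu_x\rangle_{\mathcal{H}}$. For $\theta_0^{DS}$ the identical argument runs with $d\tilde{\text{P}}(x)$ in place of $d\text{P}(x)$, yielding $\nu_x$; for $\theta_0^{ATT}(a,a')$ one integrates against the conditional $d\text{P}(x\mid a')$ to obtain $\mu_x(a')$; and for $\theta_0^{CATE}(a,v)$ one works in $\mathcal{H}_D\otimes\mathcal{H}_V\otimes\mathcal{H}_X$, holding both $\phi(a)$ and $\phi(v)$ fixed while integrating $\phi(x)$ against $d\text{P}(x\mid v)$ to obtain $\mu_x(v)$. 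The characteristic property (Assumption~\ref{regularity}.3) is not required for the equalities themselves, but it guarantees the mean embeddings are injective, so that each estimand is uniquely identified by its embedding.
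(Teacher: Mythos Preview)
Your proposal is correct and follows essentially the same approach as the paper: apply the reproducing property to write the integrand as $\langle\gamma_0,\phi(a)\otimes\phi(x)\rangle_{\mathcal{H}}$, invoke Bochner integrability to pull the integral inside the inner product, then factor out $\phi(a)$ and identify the remaining integral as the mean embedding. Your treatment is in fact more detailed than the paper's sketch, as you explicitly justify Bochner integrability via Assumptions~\ref{regularity}.1--\ref{regularity}.2 and correctly flag that the characteristic property is not needed for the identities themselves.
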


Below we present a brief proof for $ATE$ with some explanation, a further explanation and proofs for all four functions are given in \cite{grettonkernelcausal}.

\begin{align}
    \theta_0(a) &= \int E(Y | A=a, X = x) d \text{pr}(x) \label{ATE proof eq1}\\
    &= \int \gamma_0(a,x) d \text{pr}(x) \quad \text{(Definition)}\\
    &= \int \bra{\gamma_0}\ket{\phi(a)\otimes \phi(x)}_{\mathcal{H}} d \text{pr}(x) \quad \text{(Reproducing Property)}\\
    &= \bra{\gamma_0}\ket{\int \phi(a)\otimes \phi(x) d \text{pr}(x)}_{\mathcal{H}} \quad \text{(Bochner Integrability)}\\
    &=\bra{\gamma_0}\ket{ \phi(a)\otimes \int \phi(x) d \text{pr}(x)}_{\mathcal{H}}  \quad (\phi(a) \indep \text{pr}(x))\\
    &=\bra{\gamma_0}\ket{\phi(a)\otimes \mu_x}\label{ATE proof eq final}\\
\end{align}
If we now approximate the mean embeddings by the average feature maps over measured data or via KRR,

\begin{align}
    \hat{\mu}_x &= \frac{1}{n} \sum_{x_i} \phi(x_i) \\
    \hat{\nu}_x &= \frac{1}{\tilde{n}} \sum_{\tilde{x_i}} \phi(\tilde{x_i}) \\
    \hat{\mu}_x(i)(\cdot) &= K_{(\cdot)X}(K_{II} + n \lambda_2 I)^{-1}K_{Ii},
\end{align}
we can estimate the counterfactual response as shown in lemma \ref{estimation}, arriving at a closed form non-parametric classical algorithm to estimate the counterfactual mean distributions.

\begin{lemma}\label{estimation}(Causal Function Estimation)\\
We can estimate counterfactual response via kernel ridge regression as,
    \begin{align}
    \hat{\theta}(a) &= \bra{\gamma_0}\ket{\phi(a)\otimes \hat{\mu}_x}\\
    &= n^{-1} Y^T (K_{AA} \odot K_{XX} +  n\lambda I)^{-1} (K_{Aa} \odot \sum_{x_i} K_{Xx_i})\\
    \hat{\theta}^{DS}(a) &= \tilde{n}^{-1} Y^T (K_{AA} \odot K_{XX} +  n\lambda I)^{-1} (K_{Aa} \odot \sum_{\tilde{x_i}} K_{X\tilde{x_i}}) \\
    \hat{\theta}_0^{ATT}(a,a') &= Y^T \qty(K_{AA} \odot K_{XX} + n\lambda I)^{-1}\qty[K_{Aa} \odot \{K_{XX}(K_{AA} + n\lambda_{1}I)^{-1}K_{Aa'}\}]\\
    \hat{\theta}_0^{CATE}(a,v) &= Y^T \qty(K_{AA} \odot K_{VV} \odot K_{XX} + n\lambda I)^{-1} \qty[K_{Aa} \odot K_{Vv} \odot \{K_{XX}(K_{VV} + n\lambda_{2}I)^{-1}K_{Vv}\}]
\end{align}
\end{lemma}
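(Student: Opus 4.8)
The plan is to start from the abstract inner-product expressions established in Lemma~\ref{kernalising response} and convert each into explicit matrix form by substituting the closed-form KRR estimator for $\gamma_0$ together with the empirical (conditional) mean embeddings. The central observation is that, under Assumption~\ref{learnable}, the regression lives in the tensor product RKHS $\mathcal{H}_D \otimes \mathcal{H}_X$, so the representer theorem combined with Eq.~(\ref{eq:cef_form}) yields $\hat{\gamma}_0 = \sum_{i=1}^n \alpha_i\, \phi(a_i)\otimes\phi(x_i)$ with coefficient vector $\alpha = (G + n\lambda I)^{-1}Y$, where $G$ is the Gram matrix of the product kernel. The first key step is to show that this Gram matrix factorises as the Hadamard product $G = K_{AA}\odot K_{XX}$, which follows immediately from $k((a_i,x_i),(a_j,x_j)) = k_D(a_i,a_j)\,k_X(x_i,x_j)$ and the definition of the entrywise product. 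This is the algebraic fact that turns the tensor structure into the $K_{AA}\odot K_{XX}$ appearing in every line of the lemma.

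Next I would expand the inner product for dose response. Writing $\hat{\theta}(a) = \langle \hat{\gamma}_0, \phi(a)\otimes\hat{\mu}_x\rangle$ and using the factorisation of the inner product on the tensor product space, each summand splits as $\langle \phi(a_i),\phi(a)\rangle_{\mathcal{H}_D}\langle\phi(x_i),\hat{\mu}_x\rangle_{\mathcal{H}_X}$. Substituting the empirical embedding $\hat{\mu}_x = n^{-1}\sum_j \phi(x_j)$ and applying the reproducing property gives $\langle\phi(a_i),\phi(a)\rangle = (K_{Aa})_i$ and $\langle\phi(x_i),\hat{\mu}_x\rangle = n^{-1}(\sum_j K_{Xx_j})_i$. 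Collecting the product over $i$ into a Hadamard product and contracting with $\alpha$ reproduces $\hat{\theta}(a) = n^{-1} Y^T (K_{AA}\odot K_{XX} + n\lambda I)^{-1}(K_{Aa}\odot\sum_{x_i}K_{Xx_i})$. The distribution-shift estimand is identical after replacing $\hat{\mu}_x$ by $\hat{\nu}_x = \tilde{n}^{-1}\sum_{\tilde{x}_i}\phi(\tilde{x}_i)$.

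For ATT and CATE the only change is that the $X$-embedding becomes conditional, so I would substitute the conditional mean embedding estimator $\hat{\mu}_x(\cdot)$ rather than the unconditional average. Repeating the tensor-factorisation argument, the inner products $\langle\phi(x_i),\hat{\mu}_x(a')\rangle_i$ then collapse to the vector $K_{XX}(K_{AA}+n\lambda_1 I)^{-1}K_{Aa'}$ for ATT (conditioning on $A=a'$), and analogously to $K_{XX}(K_{VV}+n\lambda_2 I)^{-1}K_{Vv}$ for CATE, with the extra $K_{Vv}$ factor arising from the additional $\mathcal{H}_V$ tensor slot. The CATE case simply carries the three-fold Hadamard product $K_{AA}\odot K_{VV}\odot K_{XX}$ inside the regularised inverse, mirroring the three-fold tensor product RKHS.

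The main obstacle I anticipate is bookkeeping the interaction of the two distinct regularisations in the ATT and CATE estimators: the outer ridge $n\lambda$ governing the regression of $Y$ and the inner ridge $n\lambda_1$ (or $n\lambda_2$) governing the conditional mean embedding. One must verify that these remain algebraically decoupled, so that the conditional embedding can be estimated first and then inserted into the outer contraction without cross terms. This relies on the injectivity guaranteed by the characteristic property in Assumption~\ref{regularity}.3 and on the inner product factorising cleanly across the $\mathcal{H}_D$, $\mathcal{H}_V$ and $\mathcal{H}_X$ slots; everything else is routine substitution of the reproducing property.
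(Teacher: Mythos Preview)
Your proposal is correct and follows essentially the same route as the paper's proof: start from the kernelised inner-product form of Lemma~\ref{kernalising response}, plug in the empirical (or conditional) mean embeddings, and invoke the closed-form KRR solution \eqref{eq:cef_form} on the tensor-product RKHS. The paper's own argument is in fact a much terser sketch for $\hat{\theta}(a)$ only---it expands $\hat{\mu}_x$, uses linearity of the tensor product, and then jumps to the final matrix expression by citing ``standard KRR'' and \cite{grettonkernelcausal}---so your explicit identification of the Gram matrix as $K_{AA}\odot K_{XX}$ via the product-kernel structure, and your treatment of the nested regularisation in ATT/CATE through the conditional mean embedding, are exactly the details the paper leaves implicit.

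One minor remark: the decoupling of the outer ridge $n\lambda$ from the inner ridge $n\lambda_1$ (or $n\lambda_2$) that you flag as an obstacle is purely algebraic and follows directly from the factorisation of the tensor-product inner product; the characteristic property of Assumption~\ref{regularity}.3 is not what makes this work (it is needed only so that the mean embedding is injective as a map on distributions, not for the closed-form estimator itself).
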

Sketch of proof for $ATE$, further details in \cite{grettonkernelcausal},

\begin{align}
    \hat{\theta}(a) &= \bra{\gamma_0}\ket{\phi(a)\otimes \hat{\mu}_x}\\
    &= \bra{\gamma_0}\ket{\phi(a)\otimes n^{-1} \sum_{x_i} \phi(x_i)} \\
    &= n^{-1}\sum_{x_i} \bra{\gamma_0}\ket{\phi(a)\otimes \phi(x_i)} \quad \text{(Linearity of Tensor Prod)}\\
    &= n^{-1} Y^T (K_{AA} \odot K_{XX} + n\lambda I)^{-1} (K_{Aa} \odot \sum_{x_i} K_{Xx_i}) \quad \text{standard KRR}.
\end{align}

\section{Proof of Uniform Consistency for Classical Causal Inference Algorithm}\label{Section: Classical Uniform Consistency Results}

In applications with sensitive outcomes such as healthcare and econometrics, it is especially relevant to understand whether or not algorithms are accurate---interventions can have harmful, real world consequences if estimated incorrectly. It is therefore important to bound the error between our causal estimand and the true causal effect across the full range of possible interventions. Statistical estimators can be described as weakly consistent, strongly consistent and uniformly consistent. When working with very large populations, one can consider asymptotic behavior: a \emph{weakly consistent} estimator is one which for any $\epsilon$, the probability of an estimator being $\epsilon$ close to the true quantity approaches 1 \cite{Vaart_1998}. Formally an estimator $\hat{\theta}_n$ is weakly consistent if for any $\epsilon$,

\begin{equation}
    \lim_{n\to\infty} P(|\hat{\theta}_n - \theta| < \epsilon) = 1,
\end{equation}
where $\theta$ is the true value we wish to estimate. This means that while each individual estimator may not be $\epsilon$ close to the true value, if we generate many random samples, the probability that the estimator is close to $\theta$ approaches 1. \emph{Strong consistency} is a stronger condition, where if we generate many estimators from random samples, \textit{every} estimator must converge to $\theta$ as our sample size approaches $\infty$. Formally this is written as,

\begin{equation}
    P(\lim_{n\to \infty} \hat{\theta}_n = \theta) = 1.
\end{equation}

Strong consistency implies weak consistency. Intuitively, weak consistency allows for edge case errors to appear, as long as the probability of them appearing tends to $0$. Strong consistency specifies that the estimator \textit{must} converge and hence edge case errors cannot appear after some large enough $n$. 

From here on, we use consistency and strong consistency interchangeably unless otherwise specified. Traditionally in statistics, one only considers estimating a singular value, usually for a hypothesis test, however, in causal inference, we often estimate response curves \cite{10.1093/biomet/90.3.491}. Thus, we wish the estimator to converge in probability to the true underlying function, uniformly over the entire treatment domain, as the sample size approaches infinity. Point-wise strong consistency implies that a function must be strongly consistent for any input $x$. Formally for any $x\in \mathcal{X}$, 

\begin{equation}
    P(\lim_{n\to \infty} \hat{\theta}_n(x) = \theta(x)) = 1.
\end{equation}

However, if one wishes to consider worst case error scenarios - which is useful in sensitive applications - a stronger condition is required. \emph{Uniform consistency} ensures that every $\delta$ neighborhood of each point must converge. Uniform strong consistency can be written in terms of supremum norm (i.e. worst case error),

\begin{equation}
    P(\lim_{n\to \infty} \sup_{x} | \hat{\theta}_n(x) - \theta(x)| = 0) = 1.
\end{equation}
This is the definition we work with for the remainder of the appendix. Note that uniform consistency does not consider the \emph{rate} at which an estimator will converge. To quantify how quickly the estimator converges to the true function as the sample size increases, further analysis is required.

To prove uniform consistency of the causal functions one must make two assumptions on the form of the expectation function, $\gamma_0$. This proof has been adapted and expanded from \cite{grettonkernelcausal}. Recall, $\gamma_0$ is an element of an RKHS, which by the representer theorem one can write as

\begin{equation}
    f = \sum_{i=0}^\infty \alpha_i \phi_i,
\end{equation}
where $\alpha_i \in \mathbb{C}$ and $\phi_i$ are the basis functions of the RKHS $\mathcal{H}$. Any function in an RKHS with a bounded kernel is square integrable and hence we can say that for any $f \in \mathcal{H}$,

\begin{equation}
    \sum_{i=0}^\infty \frac{\alpha_i^2}{\eta_i} < \infty,
\end{equation}
where $\eta_i$ are the eigenvalues of $\phi_i$, with the eigenvalues ordered from largest to smallest, i.e. $\eta_i \geq \eta_{i+1}$. To define \textit{additional} smoothness, one can introduce a constant $c > 1$, and assume that the regression, $\gamma_0$, satisfies,

\begin{equation}
    \gamma_0 \in \mathcal{H}^c := \{f = \sum_{i=0}^\infty \gamma_i \phi_i | \sum_{j=0}^\infty \frac{\gamma_j^2}{\eta_j^c} < \infty \}. 
\end{equation}

This states that $\gamma_0$ is not only square integrable but has additional smoothness, determined by the value of $c$, such that $\gamma_0$ is well approximated by leading terms of $\phi_i$. One considers values $c\in(1,2]$, where values of $c$ closer to $c=2$ generate a smoother target $\gamma_0$ and thus faster convergence for an estimator $\hat{\gamma}$. One can also define similar properties for the mean embeddings, $\mu_x(a), \mu_x(v)$ for conditional response and heterogeneous response estimators via the constants $c_1, c_2$, respectively.

 The second assumption one can make to guarantee rapid convergence, is a  restricted RKHS effective dimension. While the above assumption implies that the fraction $\frac{\gamma_j^2}{\eta_j^c}$ vanishes, here one places a direct assumption on how fast the eigenvalues $\eta_j$ decay, which emphasises the early $\phi_j$ basis functions. Mathematically, we assume there exists some constant $C$ such that,

\begin{equation}
    \eta_j \leq C j^{-b}.
\end{equation}

Bounded kernels imply that $b \geq 1$ \cite{sobolev}. Intuitively, higher values of $b$ give greater importance to earlier values of $\phi_i$ and hence corresponds to a lower effective dimension. This leads to faster convergence of an estimator $\hat{\gamma_0}$, as $\gamma_0$ can now be approximated well by fewer terms. The limit $b\to \infty$ can essentially be regarded as a finite dimensional RKHS \cite{caponnetto}. For conditional response and heterogeneous response mean embeddings $\mu_x(a)$ and $\mu_x(v)$ one can similarly define values $b_1$ and $b_2$ for the respective RKHS's to enforce fast convergence.

It has been shown that under Sobolov norm assumptions the minimax optimal rate of the causal estimators is $n^{-\frac{c-1}{2(c+1/b)}}$, a result which leverages kernel ridge regression and uses the fact that optimal regularisation occurs when $\lambda = n^{-1/(c+1/b)}$ \cite{sobolev}. Gretton et, al. \cite{grettonkernelcausal} have recently utilised these results to prove that the four classical causal algorithms are uniformly consistent and converge for large $n$, with probability $1-2\delta$, at the following rates,

\begin{align}
    \|\hat{\theta} - \theta_0\|_{\infty} &= \mathcal{O} \left[\log(1/\delta) n^{-(c-1)/\{2(c+1/b)\}}\right],\label{unif consistent equation ATE}\\
    \|\hat{\theta}_0^{DS}(\cdot, \tilde{\mathbf{p}}) - \theta_0^{DS}(\cdot, \tilde{\mathbf{p}})\|_{\infty} &= \mathcal{O}\left[ \log(1/\delta) \qty(n^{-(c-1)/\{2(c+1/b)\}} + \tilde{n}^{-1/2})\right],\\
    \|\hat{\theta}^{ATT} - \theta_0^{ATT}\|_{\infty} &= \mathcal{O}\left[\log(1/\delta)\qty(n^{-(c-1)/\{2(c+1/b)\}} + n^{-(c_1-1)/\{2(c_1+1/b_1)\}})\right],\\
    \|\hat{\theta}^{CATE} - \theta_0^{CATE}\|_{\infty} &= \mathcal{O}\left[\log(1/\delta)\qty(n^{-(c-1)/\{2(c+1/b)\}} + n^{-(c_2-1)/\{2(c_2+1/b_2)\}})\right].
\end{align}

Furthermore, when the effective dimension is finite $(b,b_1,b_2 \to \infty)$ and the targets $\gamma_0, \mu_x(a), \mu_x(v)$  are maximally smooth $(c,c_1,c_2=2)$, all estimands will converge in $\mathcal{O}(n^{-1/4})$. Note, this final rate reflects a ``best case'' result, which in many real world situations may not be realisable. Additionally, optimal $\lambda$ is not known \emph{a priori} and is usually learned through cross validation techniques.

Below we sketch a proof of the uniform consistency for $\theta_0$ taken from \cite{grettonkernelcausal}. The first few steps are adding and subtracting certain terms to get differences between estimates and true values in each entry of the kernel,

\begin{align}
&\hat{\theta}(a) - \theta_0(a) = \langle \hat{\gamma}, \phi(a) \otimes \hat{\mu}_x \rangle_\mathcal{H} - \langle \gamma_0, \phi(a)  \otimes \mu_x \rangle_\mathcal{H} \\
&= \langle \hat{\gamma}, \phi(a)  \otimes (\hat{\mu}_x - \mu_x) \rangle_\mathcal{H} + \langle (\hat{\gamma} - \gamma_0), \phi(a)  \otimes \mu_x \rangle_\mathcal{H} \\
&= \langle (\hat{\gamma} - \gamma_0), \phi(a)  \otimes (\hat{\mu}_x - \mu_x) \rangle_\mathcal{H} + \langle \gamma_0, \phi(a)  \otimes (\hat{\mu}_x - \mu_x) \rangle_\mathcal{H} + \langle (\hat{\gamma} - \gamma_0), \phi(a)  \otimes \mu_x \rangle_\mathcal{H}.
\end{align}
Therefore by the Cauchy-Schwarz inequality bounds for our estimators in $\gamma$ and $\mu_x$ we have with probability $1 - 2\delta$,

\begin{align}
|\hat{\theta}(a) - \theta_0(a)| &\leq \|\hat{\gamma} - \gamma_0\|_\mathcal{H} \|\phi(a)\|_{\mathcal{H}_\mathcal{A}} \|\hat{\mu}_x - \mu_x\|_{\mathcal{H}_x} \\
& + \|\gamma_0\|_\mathcal{H} \|\phi(a)\|_{\mathcal{H}_\mathcal{A}} \|\hat{\mu}_x - \mu_x\|_{\mathcal{H}_x} \\
&+ \|\hat{\gamma} - \gamma_0\|_\mathcal{H} \|\phi(a)\|_{\mathcal{H}_\mathcal{A}} \|\mu_x\|_{\mathcal{H}_x} \\
&\leq m_a \cdot r_\gamma(n, \delta, b, c) \cdot r_\mu(n, \delta) + m_a \|\gamma_0\|_\mathcal{H} \cdot r_\mu(n, \delta) + m_a m_x \cdot r_\gamma(n, \delta, b, c) \\
&= \mathcal{O} \left(n^{-\frac{1}{2}\cdot \frac{c}{c + \frac{1}{b}}}\right).
\end{align}

\section{Uniform Consistency Guarantee and Convergence Rates with Quantum Dose Response, Distribution Shift, Heterogenous Response and Conditional Response}\label{Section: Estimation Convergence with Quantum Algorithms}

Following from the main text, we can provide equivalent proofs for the uniform consistency guarantees and convergence rates of our other three quantum algorithms (distribution shift, heterogenous response and conditional response). 

Identically to dose response ($\theta(a)$), we can note that for distribution shift ($\theta^{DS}(a,\tilde{P})$),

\begin{align}
    |\hat{\theta}_{quantum}^{DS}(a) - \theta_0^{DS}(a)| &\leq|\hat{\theta}_{quantum}^{DS}(a) - \hat{\theta}^{DS}(a)| + |\hat{\theta}^{DS}(a) - \theta_0^{DS}(a)| \\
    &\leq\epsilon_k +  \norm{Y} \norm{A^{-1}b} \epsilon + \mathcal{O}(n^{-\frac{1}{2}\frac{c-1}{c+1/b}} + \tilde{n}^{-\frac{1}{2}}),
\end{align}
again permitting the same convergence rate when considering $\epsilon$ but holding the same tradeoff for $\epsilon_k$.

Next we look at the consistency for conditional response ($\hat{\theta}_0^{CATE}(a,v)$) and heterogeneous response $(\hat{\theta}_0^{ATT}(a,a'))$. These algorithms can be similarly evaluated via quantum subroutines as shown in algorithm \ref{alg 2}.

\begin{algorithm}[!htbp]
\caption{Quantum Conditional Dose Response $\hat{\theta}_0^{CATE}(a,v)$\,\,($\hat{\theta}_0^{ATT}(a,a')$)}
\label{alg 2}
\KwIn{Kernel entries: $K_{AA}, K_{VV}, K_{XX}, K_{Aa}, K_{Vv}$\,\,($K_{Aa'}$); regularization parameters $\lambda, \lambda_2$\,\,($\lambda_1$); number of data points $n$; outcome vector $Y$.}
\KwOut{Approximate quantum state with amplitude proportional to $\hat{\theta}_0^{CATE}(a,v)$\,\,($\hat{\theta}_0^{ATT}(a,a')$)}

\textbf{Classical Computation:} \\
Compute $A_1 = K_{AA} \odot K_{VV} \odot K_{XX} + n\lambda I$\,\,($K_{AA} \odot K_{XX} + n\lambda I$)\;
Compute $A_2 = K_{VV} + n \lambda_2 I$\,\,($K_{AA} + n \lambda_1 I$)\;
Normalize $b = K_{Vv}$\,\,($K_{Aa}$)\;

\textbf{Quantum Subroutine:} \\
Prepare quantum states $\ket{b}$ and $\ket{Y}$\;
Use a quantum linear solver to obtain $\ket{A_2^{-1} b}$ to additive error $\epsilon_2$\;
Apply the matrix $K_{XX}$ via a block encoding to $\ket{A_2^{-1} b}$\;
Compute Hadamard product:\\
$\ket{x} = K_{Aa} \odot K_{Vv} \odot \ket{K_{XX} A_2^{-1} b}$\,\,($K_{Aa'} \odot \ket{K_{XX} A_2^{-1} b}$)\;
Apply a quantum linear solver to obtain $\ket{A_1^{-1} x}$ to additive error $\epsilon_1$\;
Measure overlap $\bra{Y}\ket{A_1^{-1} x}$\;
\end{algorithm}

As the quantum algorithms for Conditional Response and Heterogeneous Response each require two adiabatic inversions, with additive errors $\epsilon_1,\epsilon_2$, we can apply a similar process as above. Beginning with ATT, we note that

\begin{align}
    |\hat{\theta}_{quantum}^{ATT}(a,a') - \theta_0^{ATT}(a,a')| &= |\hat{\theta}_{quantum}^{ATT}(a,a') - \hat{\theta}^{ATT}(a,a') + \hat{\theta}^{ATT}(a,a') - \theta_0^{ATT}(a,a')|   \\
    &\leq|\hat{\theta}_{quantum}^{ATT}(a,a') - \hat{\theta}^{ATT}(a,a')| + |\hat{\theta}^{ATT}(a,a') - \theta_0^{ATT}(a,a')| \\
    &\leq |\bra{Y}\ket{A_2^{-1}x}_{quantum} - \langle{Y},{A_2^{-1}x}\rangle| + \mathcal{O}\qty(n^{-\frac{1}{2}\frac{c-1}{c+1/b}} + n^{-\frac{1}{2}\frac{c_1-1}{c_1+1/b_1}}).\\
    &\leq |\bra{Y}\ket{A_2^{-1}x}_{quantum} - \langle{Y},{A_2^{-1}x}\rangle_{quantum}| + |\langle{Y,{A_2^{-1}x}\rangle_{quantum} - \langle Y, {A_2^{-1}x}} \rangle| \nonumber\\ 
    &+ \mathcal{O}\qty(n^{-\frac{1}{2}\frac{c-1}{c+1/b}} + n^{-\frac{1}{2}\frac{c_1-1}{c_1+1/b_1}})\\
    &\leq \epsilon_k + \norm{{Y}}\norm{\ket{A_2^{-1}x}_{quantum} - \ket{A_2^{-1}x}} + \mathcal{O}\qty(n^{-\frac{1}{2}\frac{c-1}{c+1/b}} + n^{-\frac{1}{2}\frac{c_1-1}{c_1+1/b_1}}).\label{final ATT}
\end{align}

We hence need to find a bound for $\norm{\ket{A_2^{-1}x}_{quantum} - \ket{A_2^{-1}x}}$ where $\ket{x} = \ket{K_{Aa'}} \odot \ket{K_{XX} A_1^{-1}b}$. We can bound the error in $x$ by considering,

\begin{align}
    \norm{\ket{A_1^{-1}b}_{quantum} - \ket{{A_1^{-1}b}}} &\leq \norm{A_1^{-1}b} \epsilon_1\\
    \norm{\ket{K_{XX} A_1^{-1}b}_{quantum} - \ket{K_{XX} {A_1^{-1}b}}} &\leq \norm{K_{XX}} \norm{A_1^{-1}b} \epsilon_1\\
    \norm{\ket{K_{Aa'}} \odot \ket{K_{XX} A^{-1}b}_{quantum} - \ket{K_{Aa'}} \odot \ket{K_{XX} {A^{-1}b}}} &\leq \sup_{i,j} \{K_{Aa'}^{(i,j)}\} \norm{K_{XX}} \norm{A_1^{-1}b} \epsilon_1\\
    \norm{\ket{\hat{x}} - \ket{x}} &\leq m_a \norm{K_{XX}} \norm{A_1^{-1}b} \epsilon_1. 
\end{align}
Here we denote $\hat{x}$ as the quantum approximation of the vector $\ket{x}$ by quantum inverse, matrix multiplication and hadamard product. The evaluation of the second matrix inversion on the quantum approximation state $\ket{x}$ then yields the inequality,

\begin{align}
    \norm{\ket{A_2^{-1} \hat{x}}_{quantum} - \ket{A_2^{-1}x}} &\leq \norm{\ket{A_2^{-1} \hat{x}}_{quantum} - \ket{A_2^{-1}\hat{x}}} + \norm{\ket{A_2^{-1} \hat{x}} - \ket{A_2^{-1}x}}\\
    &\leq \norm{A_2^{-1}\hat{x}}\epsilon_2 + \norm{A_2^{-1}} \norm{ \ket{\hat{x}} - \ket{x}}\\
    &\leq \norm{A_2^{-1}\hat{x}}\epsilon_2 + \norm{A_2^{-1}} m_a \norm{K_{XX}} \norm{A_1^{-1}b} \epsilon_1.
\end{align}
We can substitute this into equation \ref{final ATT} to find that, 

\begin{align}
    |\hat{\theta}_{quantum}^{ATT}(a,a') - \theta_0^{ATT}(a,a')| &\leq \epsilon_k +  \norm{Y} \norm{A_2^{-1}\hat{x}}\epsilon_2 +\norm{Y} \norm{A_2^{-1}} m_a \norm{K_{XX}} \norm{A_1^{-1}b} \epsilon_1 \nonumber\\ &+ \mathcal{O}\qty(n^{-\frac{1}{2}\frac{c-1}{c+1/b}} + n^{-\frac{1}{2}\frac{c_1-1}{c_1+1/b_1}}).
\end{align}
Importantly, $\norm{Y}, \norm{A_2^{-1}x}, \norm{A_2^{-1}}, m_a, \norm{K_{XX}}$ and $\norm{A_1^{-1}b}$ are all bounded and independent of $a$.
Again we can see that to maintain the same convergence rate of the classical algorithm, we must have $\epsilon_1,\epsilon_2$ such that 

\begin{equation}
    \epsilon_1,\epsilon_2 \in \Omega(e^{-\Polylog (n)}) \cap \mathcal{O}\qty(n^{-\frac{1}{2}\frac{c-1}{c+1/b}} + n^{-\frac{1}{2}\frac{c_1-1}{c_1+1/b_1}}),
\end{equation}
if the output of our algorithm is a quantum state rather than the classical estimate. However, if we want the classical output we must include the measurement error $\epsilon_k$ which scales as $\Omega(1/\Polylog (n)) > \mathcal{O}\qty(n^{-\frac{1}{2}\frac{c-1}{c+1/b}} + n^{-\frac{1}{2}\frac{c_1-1}{c_1+1/b_1}})$. Hence for optimal convergence while retaining exponential speedup for the quantum inverses we require,

\begin{align}
    |\hat{\theta}_{quantum}^{ATT}(a,a') - \theta_0^{ATT}(a,a')| &\leq \epsilon_k \in \Theta(1/\Polylog (n)),
\end{align}
which does converge to $0$, hence retaining uniform consistency but does so with an exponentially slower convergence rate.

The same logic applies to CATE, except we have an additional $m_v$ factor,

\begin{align}
    |\hat{\theta}_{quantum}^{CATE}(a,v) - \theta_0^{ATT}(a,v)| &\leq \epsilon_k + \norm{Y} \norm{A_2^{-1}\hat{x}}\epsilon_2 +\norm{Y} \norm{A_2^{-1}} m_a m_v \norm{K_{XX}} \norm{A_1^{-1}b} \epsilon_1 \nonumber \\ &+ \mathcal{O}\qty(n^{-\frac{1}{2}\frac{c-1}{c+1/b}} + n^{-\frac{1}{2}\frac{c_2-1}{c_2+1/b_2}}),
\end{align}
which does not change the conditions for $\epsilon_1,\epsilon_2, \epsilon_k$ as constant factors are absorbed in $\mathcal{O}, \Omega, \Theta$ notation. 

Practically, this implies that we can apply the adiabatic inverse accurately enough to maintain a quantum advantage while not reducing the asymptotic scaling guarantee of any of the four classical algorithms. However, any overlap evaluation which requires $1/\epsilon_k^2$ runs forces a tradeoff between tight convergence bounds and the exponential speedup of our quantum algorithm. 

Despite this, even with additive and sampling errors we retain uniform consistency of the quantum algorithm--- a  highly desirable result in causal estimation applications for healthcare and econometrics.

\section{Background on Quantum Kernel Evaluation}\label{Section: Quantum kernel evaluation}

Recall, one of the computational bottlenecks of the classical algorithms is the construction of the kernel matrices. While one cannot improve on this complexity using quantum resources, it has been conjectured that using an explicitly quantum feature embedding to generate the specific kernel for evaluation may offer some advantage in machine learning tasks. Classical kernel evaluation typically makes use of the ``kernel trick'', where one uses an analytic expression to evaluate the kernel $k(x,x')$, rather than explicitly calculating the inner product in feature space $\langle \phi(x), \phi(x')\rangle$. However,  one can also evaluate kernel entries using quantum hardware, where the kernel corresponds to the overlap between two quantum states. Each individual data example is encoded in a quantum state, where this encoding effectively transforms the data to a higher dimensional RKHS. The kernel inner product evaluation can then be estimated through measurement. Each of the kernels in the algorithm above has the potential to be evaluated using such methods, although it is not yet clear if there is an advantage in doing so.

While it seems plausible that quantum kernels may offer some advantage over classical kernels given their access to a larger hypothesis space of learned functions, to date very few learning tasks have been identified where there is a provable advantage. The use of quantum kernels to solve the discrete logarithm problem is a notable exception \cite{Liu_2021} and is continuing to inspire work in this direction  \cite{schuld2021-quantumkernel,Gil-Fuster:2023ifl,schuld_PhysRevLett.122.040504,Hubregtsen_2022,Gyurik_2023,Jerbi_2023,thanasilp2022exponential,Slattery_2023,glick2022covariant,Ghobadi_PhysRevA.104.052403,Bowie_2023,Tiwari_2022_coherent,thanasilp2022exponential,Huang_2021,kübler2021inductive,Enhancing_2024}. Furthermore, while much has been said about generalisability and learnability in the context of quantum kernels, both these properties are functions of the choice of kernel, the training data, and any hyperparameter tuning that may occur during training \cite{canatar2023bandwidth,Shaydulin_2022}. As such, it is likely that any statements about learnability and generalisability would need to consider the specific learning task in question. 

In order to maintain the four proposed quantum algorithms, we require injectivity of mean embeddings and hence any quantum kernel encoding must also be \emph{characteristic}. Characteristic kernels ensure that mapping of probability distributions to a point (i.e. function of $x$) in the RKHS can occur without loss of information. While characteristic quantum kernels are known to exist \cite{henderson2024quantumkernelmachinelearning,schuld2021-quantumkernel}, not all quantum kernels will satisfy this criteria.

\section{Incremental Functions, Counterfactual Distributions and Graphical Models}\label{Incremental Functions, Counterfactual Distributions and Graphical Models}

In addition to the problem classes previously discussed --— such as dose-response estimation, heterogeneous treatment effects, conditional response, and distributional shift --— the appendices of \cite{grettonkernelcausal} present ten further problem structures that provide a more granular breakdown of causal inference settings. These include four incremental problem structures, which focus on the derivative interventions; four counterfactual structures, which centre on estimating the whole distribution under hypothetical interventions; and two graphical problem structures, which leverage known or assumed causal graphs to constrain or inform inference even with unobserved confounders. Each of these problem classes offers a different lens on the causal learning task and poses unique challenges for model design. A quantum implementation of these algorithms naturally follows the previous derivation with different constant factors, namely the supremum of the kernel matrix. For completeness we include a description of each of these causal estimands and show the difference in constant factors.

\subsection{Incremental Functions}\label{incremental functions}

One may be interested to explore incremental functions, defined as the hypothetical outcome of the derivative of an intervention. This encodes how small changes in the intervention procedure changes the outcome distribution.   

The four incremental functions mirror the initial four causal estimands with,

\begin{enumerate}
    \item Incremental Dose response: $\theta_0^{\nabla:ATE}(a)=\mathbb{E}\{\nabla_a Y^{(a)}\}$.
     \item Incremental Dose response with distribution shift: $ \theta_0^{\nabla:DS}(a,\tilde{\text{\normalfont P}})=\mathbb{E}_{\tilde{\text{\normalfont P}}}\{\nabla_a Y^{(a)}\}$. 
    \item Incremental Conditional response: $ \theta_0^{\nabla:ATT}(a,a')=\mathbb{E}\{\nabla_{a'} Y^{(a')} \mid A=a\}$.
     \item Incremental Heterogeneous response: $\theta_0^{\nabla:CATE}(a,v)=\mathbb{E}\{\nabla_a Y^{(a)} \mid V=v\}$.
\end{enumerate}

Given the same problem assumptions, we can formulate this estimation as an integral via 

\begin{align}
    \theta^{\nabla:ATE}_0(a)  &= \int \nabla_a E(Y | A = a, X = x) d \text{P}(x)\\
    \theta^{\nabla:DS}_0(a) &= \int \nabla_a E(Y | A = a, X = x) d \tilde{\text{P}}(x)\\
    \theta^{\nabla:ATT}_0(a,a')  &= \int \nabla_{a'} E(Y | A = a', X = x) d \text{P}(x|a) \\
    \theta^{\nabla:CATE}_0(a,v) &= \int \nabla_a E(Y | A = a, V=v, X = x) d \text{P}(x|v).
\end{align}    
A similar process via Bochner integrability and the reproducing property yield,

\begin{align}
    \theta_0^{\nabla: ATE}(a) &=\bra{\gamma_0}\ket{ \nabla_a \phi(a)\otimes \mu_x}_{\mathcal{H}}\\
    \theta_0^{\nabla: DS}(a) &= \bra{\gamma_0}\ket{\nabla_a \phi(a)\otimes \nu_x}_{\mathcal{H}}\\
    \theta_0^{\nabla: ATT}(a,a') &= \bra{\gamma_0}\ket{\nabla_{a'} \phi(a') \otimes \mu_x(a)}_{\mathcal{H}}\\
    \theta_0^{\nabla: CATE}(a,v) &= \bra{\gamma_0}\ket{\nabla_a \phi(a)\otimes \phi(v) \otimes \mu_x(v)}_{\mathcal{H}}
\end{align}
where $\mu_x = \int \phi(x) d \text{pr}(x),\nu_x = \int \phi(x) \tilde{\text{pr}}(x) $ and $\mu_x(\cdot) = \int \phi(x) d \text{pr}(x|\cdot)$. Note that the boundedness and continuity of $\nabla_a \phi(a)$ is preserved.

Application of kernel ridge regression and approximate mean embeddings give us the four equations for incremental function estimation,

\begin{align}
\hat{\theta}^{ATE}(a) &= n^{-1} Y^T (K_{AA} \odot K_{XX} + n\lambda I)^{-1}( \nabla_a K_{Aa} \odot \sum_{x_i} K_{Xx_i}) \label{alg_app}\\
\hat{\theta}^{DS}(a) &= \tilde{n}^{-1} Y^T (K_{AA} \odot K_{XX} +  n\lambda I)^{-1}(\nabla_a  K_{Aa} \odot \sum_{\tilde{x_i}} K_{X\tilde{x_i}}) \\
\hat{\theta}_0^{ATT}(a,a') &= Y^T \qty(K_{AA} \odot K_{XX} + n\lambda I)^{-1} \qty[ \nabla_{a'} K_{Aa'} \odot \{K_{XX}(K_{AA} + n\lambda_{1}I)^{-1}K_{Aa}\}]\\
\hat{\theta}_0^{CATE}(a,v) &= Y^T \qty(K_{AA} \odot K_{VV} \odot K_{XX} + n\lambda I)^{-1} \nonumber \\& \qty[\nabla_a  K_{Aa} \odot K_{Vv} \odot \{K_{XX}(K_{VV} + n\lambda_{2}I)^{-1}K_{Vv}\}],
\end{align}
where $\nabla_a K_{Aa} = (\nabla_a k(A_1,a),\nabla_a k(A_2,a),\dots,\nabla_a k(A_n,a))$ gives a vector of derivatives of the kernel function. 

The same consistency and convergence bounds hold for these incremental functions except we replace $m_a$ with $m_a' \geq \sup_{a}\norm{\nabla_a \phi(a)}$. 

Consequently, the quantum algorithms are identical to \ref{alg 1} and \ref{alg 2} except we prepare the vector $\nabla_a K_{Aa}$ rather than $K_{Aa}$ and similarly for $a'$. We hence get the same consistency bounds as in appendix \ref{Section: Estimation Convergence with Quantum Algorithms} with additive errors from the quantum algorithms except we replace $m_a$ with $m_a'$.

\subsection{Counterfactual Distributions}\label{counterfactual distributions}

Our four counterfactual distributions are the same situations as presented in section \ref{Section: Classical Causal Inference Proofs} but instead of considering the \textit{mean} outcome, we now wish to estimate the full \textit{distribution} over the outcomes $y$. Counterfactual distributions can be encoded via a kernel mean embedding with feature map $\phi(y)$ for which all stated assumptions hold. The quantities of interest are identical to those in the main body of the paper but instead of estimating expectation values $E$, we wish to estimate the probability distributions,

\begin{enumerate}
    \item Counterfactual Distribution: $\theta_0^{D:ATE}(a)=\P\{ Y^{(a)}\}$.
     \item Counterfactual Distribution with distribution shift: $ \theta_0^{D:DS}(a,\tilde{\P})=\tilde{\P} \{ Y^{(a)}\}$. 
    \item Conditional Counterfactual Distribution: $ \theta_0^{D:ATT}(a,a')=\P \{ Y^{(a')} \mid A=a\}$.
     \item Heterogeneous Counterfactual Distribution: $\theta_0^{D:CATE}(a,v)=\P\{ Y^{(a)} \mid V=v\}$.
\end{enumerate}

We can similarly write these distributions as an integral given the standard assumptions from section \ref{Section: Classical Causal Inference Proofs},

\begin{align}
    \{\theta^{D:ATE}_0(a)\}(y)  &= \int \P(y | A = a, X = x) d \text{P}(x)\\
    \{\theta^{D:DS}_0(a)\}(y) &= \int \P(y | A = a, X = x) d \tilde{\text{P}}(x)\\
    \{\theta^{D:ATT}_0(a,a')\}(y)  &= \int \P(y | A = a', X = x) d \text{P}(x|a) \\
    \{\theta^{D:CATE}_0(a,v)\}(y) &= \int \P(y | A = a, V=v, X = x) d \text{P}(x|v).
\end{align}    

Importantly, to estimate these quantities Ref. \cite{grettonkernelcausal} show that we must first estimate the embeddings of counterfactual distributions given that $\phi(Y)$ satisfies assumption \ref{regularity},

\begin{align}
    \check{\theta}^{D:ATE}_0(a)  &= \int \mathbb{E} \{\phi(Y) | A = a, X = x \} d \text{P}(x)\\
    \check{\theta}^{D:DS}_0(a) &= \int \mathbb{E} \{\phi(Y) | A = a, X = x\} d \tilde{\text{P}}(x)\\
    \check{\theta}^{D:ATT}_0(a,a')  &= \int \mathbb{E}\{\phi(Y) | A = a', X = x\} d \text{P}(x|a) \\
    \check{\theta}^{D:CATE}_0(a,v) &= \int \mathbb{E}\{\phi(Y) | A = a, V=v, X = x\} d \text{P}(x|v).
\end{align} 

Next, given that we have defined the RKHS for $Y$, we can define the conditional expectation operator $E_3: \mathcal{H}_\mathcal{Y} \to \mathcal{H}_\mathcal{A} \otimes \mathcal{H}_\mathcal{X}, \phi(\cdot) \mapsto \mathbb{E}\{\phi(y) | A = \cdot, X = \cdot \}$. Then we note the adjoint of $E_3$, denoted by $E_3^*$, is an operator for $\mathcal{H}_\mathcal{A} \to \mathcal{H}_\mathcal{Y}$. By construction, $E_3^*\{\phi(a)\otimes \phi(x)\} = \mathbb{E}(\phi(Y)|A=a,X=x)$. And similarly for CATE, where $X$ is replaced with $(V,X)$. Given regularity assumptions on this operator, immediately satisfied in the quantum case as operators are in $\mathcal{L}_2$, we can decouple integration and expectation with,

\begin{align}
    \check{\theta}_0^{D:ATE}(a) &= E_3^* \left\{ \phi(a) \otimes \mu_x \right\}, \quad \text{where} \ \mu_x = \int \phi(x) dpr(x).\\
    \check{\theta}_0^{D:DS}(a, \tilde{pr}) &= E_3^* \left\{ \phi(a) \otimes \nu_x \right\}, \quad \text{where} \ \nu_x = \int \phi(x) d\tilde{pr}(x).\\
    \check{\theta}_0^{D:ATT}(a, a') &= E_3^* \left\{ \phi(a') \otimes \mu_x(a) \right\}, \quad \text{where} \ \mu_x(a) = \int \phi(x) dpr(x | a).\\
    \check{\theta}_0^{D:CATE}(a, v) &= E_3^* \left\{ \phi(a) \otimes \phi(v) \otimes \mu_x(v) \right\}, \quad \text{where} \ \mu_x(v) = \int \phi(x) dpr(x | v).
\end{align}

The estimation of these conditional embeddings is done via kernel ridge regressions of the kernel mean embeddings $\hat{\mu},\hat{\nu}$ and the adjoint conditional expectation operator $\hat{E_3^*}$. This leads to the closed form solutions, with proof given in Supplement E of \cite{grettonkernelcausal}, 

\begin{align}
\{\hat{\theta}_0^{D:ATE}(a)\}(y) &= n^{-1} \sum_{i=1}^n K_{yY}(K_{AA} \odot K_{XX} + n\lambda_3I)^{-1}(K_{Aa} \odot K_{Xx_i}); \\
\{\hat{\theta}_0^{D:DS}(a)\}(y) &= \tilde{n}^{-1} \sum_{i=1}^{\tilde{n}} K_{yY}(K_{AA} \odot K_{XX} + n\lambda_3I)^{-1}(K_{Aa} \odot K_{X\tilde{x}_i}); \\
\{\hat{\theta}_0^{D:ATT}(a,a')\}(y) &= K_{yY}(K_{AA} \odot K_{XX} + n\lambda_3I)^{-1}[K_{Aa'} \odot \{K_{XX}(K_{AA} + n\lambda_1I)^{-1}K_{Aa}\}]; \\
\{\hat{\theta}_0^{D:CATE}(a,v)\}(y) &= K_{yY}(K_{AA} \odot K_{VV} \odot K_{XX} + n\lambda_3I)^{-1}\nonumber [K_{Aa} \odot K_{Vv} \\& 
 \odot \{K_{XX}(K_{VV} + n\lambda_2I)^{-1}K_{Vv}\}];
\end{align}

This again implies identical quantum algorithms to the main text except with the replacements $\lambda_3$ for $\lambda$ and $K_{yY}$ for $Y$. Hence the consistency bounds are equivalent to those in Appendix \ref{Section: Estimation Convergence with Quantum Algorithms} with additive errors but we replace $\norm{Y}$ with the maximum kernel entry $m_y$ and the constants in the classical consistency change from $(c,b)$ to $(c_3,b_3)$ to satisfy smoothness in $\mathcal{Y}$. 

These estimations of embeddings can be used to generate samples $(\tilde{Y}_j)$ which weakly converge to any of the four desired distributions $\theta_0^{D:}(a)$. This is also true for the quantum algorithm but would require an exponentially greater number of samples due to the sampling error. This is due to the difficulty of sampling from quantum algorithms as demonstrated in section \ref{Section: Convergence Rate with Quantum Measurement Error}.

\subsection{Graphical Models}\label{graphical models}

The third addition to the primary results of counterfactual response curves is that of graphical models via Pearl's \textit{do} calculus method \cite{JudeaPearl1995}. This model generalises the prior cases to conditions when there are unobserved confounders $U$ which may interact with $X,D,Y$. In particular, we wish to estimate,

\begin{enumerate}
    \item Counterfactual mean outcome given intervention: $\theta_0^{do}(a) = \mathbb{E}\{Y|do(D=d)\}$
    \item Counterfactual distribution given intervention: $\theta_0^{D:do}(a) = \P\{Y | do(D=d)\}$
    \item Counterfactual distribution embedding given intervention: $\check{\theta}_0^{D:do}(a) = \mathbb{E}\{\phi(Y) | do(D=d)\}$.
\end{enumerate}

Ref. \cite{grettonkernelcausal} present an argument showing $\hat{\theta}^{ATE}$ is a uniformly consistent estimator of such quantities if the back door criterion hold. Hence, they present a novel formulation for the front door criterion, which assumes,

\begin{enumerate}
    \item $X$ intercepts all directed paths from $D$ to $Y$.
    \item There is no unblocked back door path from $D$ to $X$
    \item All back door paths from $X$ to $Y$ are blocked by $D$
    \item $pr(D,X) > 0$ almost surely.  
\end{enumerate}

Given said assumptions one may express the quantities of interest as integrals, 

\begin{align}
    \theta_0^{do}(a) &= \int \mathbb{E}(Y|A=a',X=x) d\P(a') d\P(x|a)\\
    \check{\theta}_0^{D:do}(a) &= \int \mathbb{E}\{\phi(Y)|A=a',X=x\} d\P(a') d\P(x|a).
\end{align}

Given a learnable regression $\mathbb{E}(Y,A=a',X=x)$ and a regularised conditional expectation operator $E_3$, the estimations of such quantities can be found by,

\begin{align}
    \hat{\theta}^{FD}(a) &= \langle \hat{\gamma},\hat{\mu}_a \otimes \hat{\mu}_x(a) \rangle\\
    \hat{\theta}^{D:FD}(a) &= \hat{E}_3^* \{ \hat{\mu}_a \otimes \hat{\mu}_x(a) \}\\
\end{align}
where $\mu_a = \int \phi(a) d\P(a)$, and the estimator of each quantity is found via kernel ridge regression. This yields the closed form solutions,

\begin{align}
\hat{\theta}^{FD}(a) &= n^{-1} \sum_{i=1}^n Y^\top (K_{AA} \odot K_{XX} + n\lambda I)^{-1} [K_{Aa_i} \odot \{K_{XX}(K_{AA} + n\lambda_1 I)^{-1} K_{Aa}\}] \\
\{\hat{\theta}^{D:FD}(a)\}(y) &= n^{-1} \sum_{i=1}^n K_{yY} (K_{AA} \odot K_{XX} + n\lambda_3 I)^{-1} [K_{Aa_i} \odot \{K_{XX}(K_{AA} + n\lambda_1 I)^{-1} K_{Aa}\}].
\end{align}

The quantum form of these algorithms mirror algorithm \ref{alg 2} except for the substitutions $K_{Aa} \to n^{-1}\sum_{i=1}^n K_{Aa_i}$ and $Y^T \to K_{yY}$. Again, the same tradeoff between quantum speedup and slower convergence remains due to the additive error of inner product evaluation as in Appendix \ref{Section: Estimation Convergence with Quantum Algorithms}. Samples can again be generated from the estimator of the distribution embeddings which weakly approach the true distribution ($\theta_0^{D:do}(a)$).

\end{document}